\documentclass[copyright,creativecommons]{eptcs}

\bibliographystyle{eptcs}

\usepackage{multicol}
\usepackage{needspace}
\usepackage{float}
\usepackage[rflt]{floatflt}
\floatstyle{boxed}\restylefloat{table}
\hypersetup{
  colorlinks=true,
  citecolor=blue,
  linkcolor=red,
  urlcolor=blue,
  pdftitle={The probability of non-confluent systems},
  pdfauthor={Alejandro D\'iaz-Caro and Gilles Dowek}
}

\usepackage[all]{xy}
\usepackage[only,llbracket,rrbracket,llparenthesis,rrparenthesis]{stmaryrd}
\usepackage{proof}
\usepackage{wrapfig}
\usepackage{cite}
\usepackage{amsmath,amssymb,amsthm}
\usepackage{amsfonts}

\DeclareSymbolFont{cmmathcal}{OMS}{cmsy}{m}{n}
\DeclareSymbolFontAlphabet{\mathcal}{cmmathcal}

\newtheorem{theorem}{Theorem}[section]
\newtheorem{lemma}[theorem]{Lemma}
\newtheorem{corollary}[theorem]{Corollary}
\theoremstyle{definition}
\newtheorem{definition}[theorem]{Definition}
\newtheorem{example}[theorem]{Example}
\theoremstyle{remark}
\newtheorem{remark}[theorem]{Remark}

\newcommand{\sui}[1]{\ensuremath{\sum_{i=1}^{#1}}}
\newcommand{\suj}[1]{\ensuremath{\sum_{j=1}^{#1}}}

\newcommand{\ve}[1]{\ensuremath{\mathbf{#1}}}
\newcommand{\parts}[1]{\ensuremath{\mathcal{P}({#1})}}
\newcommand{\A}{\ensuremath{\mathcal{A}}}
\newcommand{\p}{\ensuremath{\mathtt{p}}}
\newcommand{\Pb}{\ensuremath{\mathtt{P}}}
\newcommand{\no}[1]{\ensuremath{#1^\sim}}
\newcommand{\Alg}{\ensuremath{\mbox{Alg}_{F}^{p}}}
\newcommand{\Lplus}{\ensuremath{\lambda_+}}
\newcommand{\Lpplus}{\ensuremath{\lambda_+^p}}

\newcommand{\arriba}[2]{\ensuremath{\genfrac{}{}{0pt}{}{#1}{#2}}}
\newcommand{\notder}{\ensuremath{{\not\phantom{\ }\!\!\!\!:}~}}
\newcommand{\AtP}[1]{\ensuremath{\llbracket{#1}\rrbracket}}
\newcommand{\PtA}[1]{\ensuremath{\llparenthesis{#1}\rrparenthesis}}
\newcommand{\eq}{\ensuremath{\rightleftarrows}}
\newcommand{\re}{\ensuremath{\hookrightarrow}}

\newcommand{\cond}[1]{\ensuremath{\scriptstyle[#1]\,}}
\newcommand{\condi}[2]{\ensuremath{{#1}{\vcenter{#2}}}}
\newcommand{\coherentContext}[1]{\ensuremath{{#1}^f}}

\title{The probability of non-confluent systems}
\author{Alejandro D\'iaz-Caro
  \institute{
    \parbox{3.8cm}{
      \centering\vspace{1.3mm}
      Universit\'e Paris Ouest\\[-0.5mm]
      200 avenue de la R\'epublique\\[-0.5mm]
      92001 Nanterre, France
    }\qquad
    \parbox{3.8cm}{
      \centering\vspace{1.3mm}
      INRIA\\[-0.5mm]
      23 avenue d'Italie, CS 81321\\[-0.5mm]
      75214 Paris Cedex 13, France
    }\\[4mm]
  }\email{alejandro@diaz-caro.info}
  \and
  Gilles Dowek
  \institute{
    INRIA\\
    23 avenue d'Italie, CS 81321\\
    75214 Paris Cedex 13, France
  }\email{gilles.dowek@inria.fr}
}

\begin{document}
\maketitle

\begin{abstract}
  We show how to provide a structure of probability space to the set of execution traces on a non-confluent abstract rewrite system, by defining a variant of a Lebesgue measure on the space of traces. Then, we show how to use this probability space to transform a non-deterministic calculus into a probabilistic one. We use as example $\Lplus$, a recently introduced calculus defined through type isomorphisms.
\end{abstract}

\section{Introduction}\label{sec:intro}
Many probabilistic calculi has been developed in the pasts years, e.g. \cite{
  BournezHoyrupRTA03,
  HerescuPalamidessiFOSSACS00,
  DipierroHankinWiklickyJLC05,
  DalLagoZorziRAIRO12,
  AndovaFMRTPS99
}. In particular, the algebraic versions of $\lambda$-calculus~\cite{ArrighiDowekRTA08,VauxMSCS09} are extensions to $\lambda$-calculus where a linear combination of terms, e.g.~$\alpha.\ve r+\beta.\ve s$, is also a term. One way to interpret such a linear combination is that it represents a term which is the term $\ve r$ with probability $\alpha$, or the term $\ve s$ with probability $\beta$. However, endowing such a calculus with a non-restrictive type system is a challenge~\cite{ArrighiDiazcaroValironDCM11,ArrighiDiazcaroValiron13}.

A simpler framework is that of non determinisitic calculi which can be seen as algebraic calculi withouth scalars. They have been studied, for instance in~\cite{
  BoudolIC94,
  BucciarelliEhrhardManzonettoAPAL12,
  deLiguoroPipernoIC95,
  DezaniciancagliniDeliguoroPipernoSIAM98,
  DiazcaroDowekLSFA12,
  DiazcaroDowek13,
  DiazcaroManzonettoPaganiLFCS13,
  DiazcaroPetitWoLLIC12,
PaganiRonchidellaroccaFI10}, however moving back from non-determinism to probabilities is not trivial.
In this paper we propose, instead of changing these models, to define a probability measure on reductions in non-deterministic systems.
In fact, as we shall see, such a probability measure can be defined on any abstract non-deterministic transition systems, or non-confluent abstract rewrite systems (ARS)~(cf.~\cite[Chapter~1]{Terese03}).
Our goal is to show that explicit probabilities are not needed in the syntax, and that the simpler non-deterministic calculi are as powerful as the more complicated probabilistic calculi.

\begin{wrapfigure}{r}{0.2\textwidth}
  \begin{center}\vspace{-0.3in}
    $\xymatrix@C=2ex@R=4ex{
      & \ve a \ar[dl] \ar[dr] & & \\
      \ve b & 					  & \ve c \ar[dl]\ar[dr] &\\
      & \ve d				  & & \ve e
    }$\end{center}
\end{wrapfigure} Consider for example the following non-confluent ARS
\begin{displaymath}
  \ve a\to\ve b{\enspace,\qquad}
  \ve a\to\ve c{\enspace,\qquad}
  \ve c\to\ve d{\enspace,\qquad}
  \ve c\to\ve e{\enspace,\qquad}
\end{displaymath}
we want to associate a probability to events such as
\begin{displaymath}
  \ve a\to^*\ve b{\enspace,\qquad}
  \ve a\to^*\ve c{\enspace,\qquad}
  \ve a\to^*\ve d{\enspace,\qquad}
  \ve a\to^*\ve e{\enspace.\qquad}
\end{displaymath}
In this example, assuming equiprobability, we have $\Pb(\ve a\to^*\ve b) = \tfrac{1}{2}$, $\Pb(\ve a\to^*\ve c) = \tfrac{1}{2}$, $\Pb(\ve a\to^*\ve d) = \tfrac{1}{4}$, $\Pb(\ve a\to^*\ve e) = \tfrac{1}{4}$. Notice that these events are not disjoints and that their sum is larger than $1$. In particular, $\ve a\to^*\ve d$ implies $\ve a\to^*\ve c$.
Defining the elements of the set $\Omega$ of elementary events is not completely straightforward, in particular because we want to make it general enough to also consider infinite cases. For example, in the following system
\begin{displaymath}
  \ve a_i \to\ve a_{i+1},\quad \ve a_i \to\ve a'_{i+1}\enspace,
\end{displaymath}
we naturally would like that $\Pb(\ve a_0 \to^*\ve a_n) = \tfrac{1}{2^n}$.

Besides defining the elements of the set $\Omega$, we need to define a notion of a measurable subset of $\Omega$ and endow such a subset with a probability distribution verifying the Kolmogorov axioms.

Our idea is to follow Lebesgue: define first the probability of rectangles, or boxes, then the probability of any set and finally measurable sets as those that verify Lebesgue's property.
Thus besides defining the set $\Omega$, we need to define a subset of $\parts{\Omega}$ of boxes.

The first intuition would be to take paths as elements of the set $\Omega$, for instance assigning the probability $\tfrac{1}{2}$ to the paths $\ve a\to\ve b$, $\tfrac{1}{4}$ to $\ve a\to\ve c\to\ve d$ and $\tfrac{1}{4}$ to $\ve a\to\ve c\to\ve e$. In fact it seems more convenient to extend such paths to strategies prescribing one reduct for each non normal object. Boxes are then defined as sets of strategies agreeing on a finite domain. We show in this paper that this is sufficient to define a probability space on strategies, consistent with the intuitive probability of events of the form $\ve a\to^*\ve b$.

Our study is generic enough to be applicable to several settings, such as automatons, or any other kind of transition systems. We use the nomenclature of abstract rewrite systems, but that of states and transitions could be used as well. Finally, we apply this construction to \Lplus~\cite{DiazcaroDowekLSFA12,DiazcaroDowek13}.

\paragraph{Plan of the paper.}
Section~\ref{sec:prelim} introduces the basic concepts of strategies and boxes, it defines the Lebesgue measures. Section~\ref{sec:prob} proves that the space of strategies forms a probability space.
Finally, in section~\ref{sec:Lpplus} we show how to modify the calculus \Lplus\ into a probabilistic calculus \Lpplus. Also, we provide an encoding of an algebraic $\lambda$-calculus into \Lpplus\ and, to some extend, the inverse translation.

\section{Preliminaries}\label{sec:prelim}

Let $\Lambda$ be a set of objects and $\to$ a function from $\Lambda\times\Lambda$ to $\mathbb{N}$ such that for all $\ve a$ the set $\{\ve b~|~\to(\ve a,\ve b)\neq 0\}$ is finite. 
We write $\ve a\to\ve b$ if $\to(\ve a,\ve b)\neq 0$. 
We allow a term to be written to the same symbol more than once, so its probability increases, e.g.\ if $\to(\ve a,\ve b)=2$ and $\to(\ve a,\ve c)=1$, then the probability of getting $\ve b$ will be the double than the probability of getting $\ve c$. Think for example in a non-deterministic choice between two objects, which happen to be equal, then there would be two ways to get such an object by doing the choice.
For a given object $\ve a\in\Lambda$, we denote by $\rho(\ve a)$ its degree, that is, the number of objects to which it can be rewritten to in one step. Definition~\ref{def:degree} formalises this.

\begin{definition}[Degree of an object]\label{def:degree}
  $\rho:\Lambda\to\mathbb{N}$ is a function defined by
  $
  \rho(\ve a)=\sum_{\ve b}\to(\ve a,\ve b)
  $.
\end{definition}

An object is normal if its degree is $0$.
We denote by $\Lambda^+=\{\ve a~|~\ve a\in\Lambda$ and $\rho(\ve a)\geq 1\}$ to the set of non-normal objects, that is, objects that can be rewritten to other objects.

A strategy prescribing one reduct for each non-normal object is defined as a function from $\Lambda^+$ to $\Lambda$ (cf.~\cite[Def.~4.9.1]{Terese03}).
\begin{definition}[Strategy]\label{def:strategy}
  A strategy is a total function $f:\Lambda^+\to\Lambda$ such that $f(\ve a)=\ve b$ implies $\ve a\to\ve b$. For instance, if $\ve a\to\ve b$ and $\ve a\to\ve b'$, there are two functions, $f$ and $f'$ assigning different results to $\ve a$.
  We denote by $\Omega$ the set of all such functions.
\end{definition}

A box is a set of strategies agreeing on a finite domain.

\begin{definition}[Box]\label{def:box}
  A box $B\subseteq\Omega$ is a set of the form $\{f~|~f(\ve a_1)=\ve a_1',\dots,f(\ve a_n)=\ve a_n'\}$ for some objects $\ve a_i$, $\ve a_i'$.
  We write $\mathcal{B}(\Omega)$ the subset of $\parts{\Omega}$ containing all the boxes.
\end{definition}

\begin{example}\label{ex:box}
  Continuing with the example given at the introduction, $\Lambda^+=\{\ve a,\ve c\}$. Let $f_1(\ve a)=\ve b, f_1(\ve c)=\ve d$ and $f_2(\ve a)=\ve b, f_2(\ve c)=\ve e$ be two of the four strategies of $\Omega$. Then the box $\{f~|~f(\ve a)=\ve b, f(\ve c)=\ve d\}=\{f_1\}$, and the box $\{f~|~f(\ve a)=\ve b\}$ is $\{f_1,f_2\}$.
  \[
    \left\{
      \begin{array}{c@{\quad=\quad}c@{\qquad;\qquad}c@{\quad=\quad}c}
	f_1
	&
	\parbox{4cm}{
	  \xymatrix@C=2ex@R=2ex{
	    & \ve a \ar[dl] & \\
	    \ve b & & \ve c \ar[dl]\\
	    & \ve d &
	  }
	}
	&
	f_2
	&
	\parbox{4cm}{
	  \xymatrix@C=2ex@R=2ex{
	    & \ve a \ar[dl] & & \\
	    \ve b & & \ve c \ar[dr] & \\
	    & & & \ve e
	  }
	}
      \end{array}
    \right\}
    \qquad=\qquad
    \parbox{2,5cm}{
      \vspace{-0.5cm}\centering Box\\
      \framebox[2cm]{
	\xymatrix@C=2ex@R=2ex{
	  & \ve a \ar[dl] \\
	  \ve b &
	}
      }
    }
  \]
\end{example}

A probability distribution can be defined in term of boxes, and then be extended to arbitrary sets of strategies.

\begin{definition}[Probability function]\label{def:prob}
  Let $\p:\mathcal{B}(\Omega)\to[0,1]$ be a total function defined over boxes as follows. If $B=\{f~|~f(\ve a_1)=\ve a_1',\dots,f(\ve a_n)=\ve a_n'\}$, then
  \begin{displaymath}
    \p(B)=\prod_{i=1}^n\frac{\to(\ve a_i,\ve a_i')}{\rho(\ve a_i)}\enspace.
  \end{displaymath}
  By convention, if no condition is given in $B$ (i.e.~$B=\Omega$), we have $n=0$, and we consider the product of zero elements to be $1$, the neutral element of the product.

  Then we define the probability measure $\Pb:\parts{\Omega}\to[0,1]$ for arbitrary sets of strategies as follows
  \begin{displaymath}
    \Pb(S)=\left\{
      \begin{array}{ll}
	0
	& \mbox{if }S=\emptyset\\
	\inf\left\{\sum_{B\in\mathcal{C}}\p(B)~|~\mathcal{C}\mbox{ is a countable family of boxes s.t. }S\subseteq\bigcup_{B\in\mathcal{C}}B\right\}
	& \mbox{in other case}
      \end{array}
    \right.
  \end{displaymath}
\end{definition}

\begin{example}\label{ex:probOfBox}
  Consider the ARS $\ve a\to\ve b$ with multiplicity $2$ and $\ve a\to\ve c$ with multiplicity $1$. 
  \begin{floatingfigure}[r]{0.16\textwidth}
    \vspace{-20pt}
    \begin{center}
      $
      \xymatrix@C=2ex@R=2ex{
	& \ve a \ar[d]\ar[dr]\ar[dl] & & \\
	\ve b & \ve b & \ve c
      }
      $
    \end{center}
  \end{floatingfigure}  Let $B$ be the box $B=\{f~|~f(\ve a)=\ve b\}$. Then we have $\p(B)=\frac{\to(\ve a,\ve b)}{\rho(\ve a)}=\frac{2}{3}$.
  Intuitively, $\Pb(B)$ is the same as $\p(B)$ (this will be later formalised in Lemma~\ref{lem:Peqp}), because $B$ is the minimum cover of $B$, that is, $\{B\}$ is the minimum family of boxes such that $B$ is in its union. 
  Hence $\Pb(B)=\frac{2}{3}$.
\end{example}
\begin{example}\label{ex:probOfS}
  We continue with the same running example depicted in the introduction. Let $f_1(\ve a)=\ve b$, $f_1(\ve c)=\ve d$ and $f_3(\ve a)=\ve c$, $f_3(\ve c)=\ve e$ be two strategies. Then the set $S=\{f_1,f_3\}$ is minimally covered by the boxes $B_1=\{f_1\}=\{f~|~f(\ve a)=\ve b, f(\ve c)=\ve d\}$ and $B_2=\{f_3\}=\{f~|~f(\ve a)=\ve c, f(\ve c)=\ve e\}$. So we have $P(S)=\p(B_1)+\p(B_2)=\frac{1}{2\times 2}+\frac{1}{2\times 2}=\frac{1}{2}$.
  \[
    S=\left\{
      \begin{array}{c@{\quad=\quad}c@{\qquad;\qquad}c@{\quad=\quad}c}
	f_1
	&
	\parbox{4cm}{
	  \xymatrix@C=2ex@R=2ex{
	    & \ve a \ar[dl] & \\
	    \ve b & & \ve c \ar[dl]\\
	    & \ve d &
	  }
	}
	&
	f_3
	&
	\parbox{4cm}{
	  \xymatrix@C=2ex@R=2ex{
	    \ve a \ar[dr] & & \\
	    & \ve c \ar[dr] & \\
	    & & \ve e
	  }
	}
      \end{array}
    \right\}
  \]
\end{example}

Now we can define the Lebesgue measure in terms of the given probability measure.
\begin{definition}[Measurable]\label{def:lebesgue}
  Let $A$ be an element of $\parts{\Omega}$, we write $\no{A}$ for the complement of $A$, that is $\Omega\setminus A$. The set $A$ is Lebesgue measurable if $\forall S\in\parts{\Omega}$, we have
  $\Pb(S)=\Pb(S\cap A)+\Pb(S\cap\no{A})$.   

  We define $\A=\{A~|~A$ is measurable$\}$.
\end{definition}

\section{A probability space of strategies}\label{sec:prob}
The aim of this section is to prove that $(\Omega,\A,\Pb)$ is a probability space.
That is, the sample space $\Omega$ (the set of all possible strategies), the set of events $\A$, which is the set of the Lebesgue measurable sets of strategies, and the probability measure $\Pb$, form a probability space. Our proof follows~\cite{BenedettoCzaja09}.
We proceed by proving that this triplet satisfies the Kolmogorov axioms, that is the probability of any event is between $0$ and $1$, the probability of $\Omega$ is $1$, and the probability of any countable sequence of pairwise disjoint (that is incompatible) events, is the sum of their probabilities.
In order to do so, we need first to prove several properties.

Lemma~\ref{lem:LebesgueProps} establishes several known properties of Lebesgue measurable sets.

\Needspace*{2\baselineskip}
\begin{lemma}\label{lem:LebesgueProps}~
  \begin{enumerate}
    \item\label{lem:LebesgueProps:it:disjointUnion} Let $A\in\A$ and $S\in\parts{\Omega}$. If $A\cap S=\emptyset$, then $\Pb(A\cup S)=\Pb(A)+\Pb(S)$.
    \item\label{lem:LebesgueProps:it:inclusion} Let $A_1,A_2\in\A$. If $A_1\subseteq A_2$, then $\Pb(A_1)\leq\Pb(A_2)$.
    \item\label{lem:LebesgueProps:it:probZeroMeasurable} $\emptyset$, the empty set, is Lebesgue measurable.
    \item\label{lem:LebesgueProps:it:complement} $A$ is Lebesgue measurable if and only if $\no{A}$ is Lebesgue measurable.
    \item\label{lem:LebesgueProps:it:unionOfLebesgueIsLebesgue} If $A_1,A_2$ are Lebesgue measurable, then $A_1\cup A_2$ is Lebesgue measurable.
  \end{enumerate}
\end{lemma}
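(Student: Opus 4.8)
The plan is to prove the five items essentially in the order listed, since each later item leans on the outer-measure properties of $\Pb$ and on the earlier items. For item~\ref{lem:LebesgueProps:it:disjointUnion}, I would apply the measurability of $A$ to the test set $S' = A\cup S$: since $(A\cup S)\cap A = A$ and $(A\cup S)\cap\no A = S$ (using $A\cap S=\emptyset$), the defining equation for $A$ gives $\Pb(A\cup S)=\Pb(A)+\Pb(S)$ directly. For item~\ref{lem:LebesgueProps:it:inclusion}, I would first record the \emph{monotonicity} of $\Pb$ as an outer measure: any countable cover of $A_2$ by boxes is also a cover of $A_1\subseteq A_2$, so the infimum defining $\Pb(A_1)$ is taken over a superset of covers, whence $\Pb(A_1)\le\Pb(A_2)$. (This monotonicity fact is worth stating once; it is used again below, and it also gives $\Pb(S)\le 1$ since $\{\Omega\}$ is a cover with $\p(\Omega)=1$.) Strictly one does not even need $A_1,A_2\in\A$ here, but I would prove it as stated.

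Item~\ref{lem:LebesgueProps:it:probZeroMeasurable} is immediate: for any $S$, $S\cap\emptyset=\emptyset$ so $\Pb(S\cap\emptyset)=0$, and $S\cap\no\emptyset = S\cap\Omega = S$, so $\Pb(S\cap\emptyset)+\Pb(S\cap\no\emptyset)=0+\Pb(S)=\Pb(S)$. Item~\ref{lem:LebesgueProps:it:complement} is purely formal: the Carathéodory condition $\Pb(S)=\Pb(S\cap A)+\Pb(S\cap\no A)$ is symmetric in $A$ and $\no A$ because $\no{\no A}=A$, so $A\in\A$ iff $\no A\in\A$.

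The real work is item~\ref{lem:LebesgueProps:it:unionOfLebesgueIsLebesgue}, the closure of $\A$ under finite unions; this is the standard but fiddly step in the Carathéodory construction and I expect it to be the main obstacle. Given $A_1,A_2\in\A$ and an arbitrary test set $S\in\parts\Omega$, subadditivity of $\Pb$ already gives ``$\le$'', so it suffices to show
\[
  \Pb(S) \;\ge\; \Pb\big(S\cap(A_1\cup A_2)\big) + \Pb\big(S\cap\no{(A_1\cup A_2)}\big).
\]
The trick is to split $S$ first with $A_1$ and then split $S\cap\no{A_1}$ with $A_2$:
\[
  \Pb(S) \;=\; \Pb(S\cap A_1) + \Pb(S\cap\no{A_1}\cap A_2) + \Pb(S\cap\no{A_1}\cap\no{A_2}),
\]
then observe $S\cap\no{A_1}\cap\no{A_2} = S\cap\no{(A_1\cup A_2)}$ by De Morgan, and that the first two pieces cover $S\cap(A_1\cup A_2)$ since $(S\cap A_1)\cup(S\cap\no{A_1}\cap A_2) = S\cap(A_1\cup A_2)$; subadditivity then bounds their sum below by $\Pb(S\cap(A_1\cup A_2))$. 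The only ingredient beyond the Carathéodory definitions is finite subadditivity of $\Pb$ (and, implicitly, that $\Pb(\emptyset)=0$), which follows from the infimum definition by concatenating covers; I would either cite it as a property of the outer measure or prove it inline before this item. I should also take care throughout that the test-set arguments are taken over \emph{all} of $\parts\Omega$, not just boxes, so that the resulting equalities genuinely witness measurability.
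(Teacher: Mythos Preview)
Your proposal is correct and follows the same overall Carath\'eodory scheme as the paper; items~\ref{lem:LebesgueProps:it:disjointUnion}, \ref{lem:LebesgueProps:it:probZeroMeasurable}, and~\ref{lem:LebesgueProps:it:complement} are handled identically. There are two small but real differences. For item~\ref{lem:LebesgueProps:it:inclusion} you argue by outer-measure monotonicity (any box cover of $A_2$ is a cover of $A_1$), which is shorter and, as you note, does not use $A_1,A_2\in\A$ at all; the paper instead applies measurability of $A_1$ to the test set $A_2$ and reads off $\Pb(A_2)=\Pb(A_1)+\Pb(A_2\cap\no{A_1})\ge\Pb(A_1)$. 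For item~\ref{lem:LebesgueProps:it:unionOfLebesgueIsLebesgue} both arguments perform the same double split $\Pb(S)=\Pb(S\cap A_1)+\Pb(S\cap\no{A_1}\cap A_2)+\Pb(S\cap\no{(A_1\cup A_2)})$, but where you invoke finite subadditivity of $\Pb$ twice (once for each inequality), the paper instead reapplies measurability of $A_1$ to the test set $S\cap(A_1\cup A_2)$ to get the \emph{equality} $\Pb(S\cap(A_1\cup A_2))=\Pb(S\cap A_1)+\Pb(S\cap\no{A_1}\cap A_2)$ directly, and thereby avoids subadditivity altogether at this stage (the paper postpones countable subadditivity to a later lemma). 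Your approach is the textbook one and perfectly sound, but be aware that it forces you to prove finite subadditivity of $\Pb$ \emph{before} this lemma rather than after.
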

\begin{proof}~
  \begin{enumerate}
    \item $\Pb(A\cup S)=\Pb((A\cup S)\cap A)+\Pb( (A\cup S)\cap\no{A})=\Pb(A)+\Pb(S)$.

    \item First notice that by definition, $\Pb(S)\geq 0$ for any $S\in\parts{\Omega}$. Hence,
      $\Pb(A_2)=\Pb(A_2\cap A_1)+\Pb(A_2\cap\no{A_1})=\Pb(A_1)+\Pb(A_2\cap\no{A_1})\geq\Pb(A_1)$.

    \item Notice that $\Pb(\emptyset)=0$. Then, $\forall S\in\parts{\Omega}$, $\Pb(S)=\Pb(S\cap\emptyset)+\Pb(S\cap\Omega)$, so $\emptyset$ is Lebesgue measurable.

    \item Let $A$ be Lebesgue measurable, then $\forall S\subseteq\Omega$, $\Pb(S)=\Pb(S\cap A)+\Pb(S\cap\no{A})=\Pb(S\cap\no{A})+\Pb(S\cap {A}^{\sim\sim})$, so $\no{A}$ is Lebesgue measurable.

    \item Let $A_1,A_2$ be Lebesgue measurable, then $\forall S\subseteq\Omega$, we have

      \noindent\parbox{0.4\textwidth}{
	\begin{equation}\label{lem:LebesgueProps:eq:A1}
	  \Pb(S)=\Pb(S\cap A_1)+\Pb(S\cap\no{A_1})
	\end{equation}
      }
      \hfill and\hfill
      \parbox{0.4\textwidth}{
	\begin{equation}\label{lem:LebesgueProps:eq:A2}
	  \Pb(S)=\Pb(S\cap A_2)+\Pb(S\cap\no{A_2})
	\end{equation}
      }\\[-1ex]
      From set theory\hfill
      \parbox{0.7\textwidth}{
	\begin{equation}\label{lem:LebesgueProps:eq:setTheory}
	  S\cap(A_1\cup A_2)=
	  S\cap(A_1\cup(\no{A_1}\cap A_2))=
	  (S\cap A_1)\cup(S\cap\no{A_1}\cap A_2)
	\end{equation}
      }

      Using $S\cap\no{A_1}$ for $S$ in (\ref{lem:LebesgueProps:eq:A2}) gives
      \begin{equation}
	\label{lem:LebesgueProps:eq:A4}
	\Pb(S\cap\no{A_1})
	=\Pb(S\cap\no{A_1}\cap A_2)+\Pb(S\cap\no{A_1}\cap\no{A_2})
	=\Pb(S\cap\no{A_1}\cap A_2)+\Pb(S\cap\no{(A_1\cup A_2)})    
      \end{equation}
      From (\ref{lem:LebesgueProps:eq:setTheory}), using items~\ref{lem:LebesgueProps:it:disjointUnion} and \ref{lem:LebesgueProps:it:inclusion}, we have
      $\Pb(S\cap(A_1\cup A_2))=\Pb(S\cap A_1)+\Pb(S\cap\no{A_1}\cap A_2)$.
      Adding $\Pb(S\cap\no{(A_1\cup A_2)})$ to both sides gives
      $\Pb(S\cap(A_1\cup A_2)+\Pb(S\cap\no{(A_1\cup A_2)})
      =\Pb(S\cap A_1)+\Pb(S\cap\no{A_1}\cap A_2)+\Pb(S\cap\no{(A_1\cup A_2)})$
      Using (\ref{lem:LebesgueProps:eq:A4}) and (\ref{lem:LebesgueProps:eq:A1}) we obtain
      $\Pb(S\cap(A_1\cup A_2))+\Pb(S\cap\no{(A_1\cup A_2)})
      =\Pb(S\cap A_1)+\Pb(S\cap\no{A_1})=\Pb(S)$.
      \qedhere
  \end{enumerate}
\end{proof}

The concept of algebra (Definition~\ref{def:algebra}) gives a closure property of subsets. As a corollary of the Lemma~\ref{lem:LebesgueProps} we can show that the set $\mathcal{A}$ of Lebesgue measurable sets form an algebra (Corollary~\ref{cor:AAlg}).

\begin{definition}[Algebra]\label{def:algebra}
  Let $X$ be a set. We say that a set $\mathbb{A}\in\parts{X}$ is an algebra over $X$ if for all $A, B\in\mathbb{A}$, $A\cup B$, $\no{A}$ and $X$ itself are also in $\mathbb{A}$.
\end{definition}

\begin{corollary}\label{cor:AAlg}
  $\A$ is an algebra over $\Omega$.
\end{corollary}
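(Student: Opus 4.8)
The plan is to derive Corollary~\ref{cor:AAlg} directly from the parts of Lemma~\ref{lem:LebesgueProps}, since the three closure conditions in Definition~\ref{def:algebra} are essentially itemised there. First I would recall what needs to be checked: that for all $A_1,A_2\in\A$ we have $A_1\cup A_2\in\A$, $\no{A_1}\in\A$, and $\Omega\in\A$.

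The first two are immediate: closure under finite union is exactly item~\ref{lem:LebesgueProps:it:unionOfLebesgueIsLebesgue}, and closure under complement is exactly item~\ref{lem:LebesgueProps:it:complement}. For the third, $\Omega\in\A$, I would argue that $\emptyset\in\A$ by item~\ref{lem:LebesgueProps:it:probZeroMeasurable}, and then $\Omega=\no{\emptyset}$ is measurable by item~\ref{lem:LebesgueProps:it:complement}. Alternatively one can check directly that $\Omega$ is measurable: for any $S\in\parts{\Omega}$ we have $S\cap\Omega=S$ and $S\cap\no{\Omega}=S\cap\emptyset=\emptyset$, so $\Pb(S\cap\Omega)+\Pb(S\cap\no{\Omega})=\Pb(S)+\Pb(\emptyset)=\Pb(S)+0=\Pb(S)$, using that $\Pb(\emptyset)=0$ by definition of $\Pb$.

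There is no real obstacle here — the corollary is a bookkeeping consequence of the lemma, which is why it is stated as a corollary rather than a theorem. The only thing to be mildly careful about is making sure each of the three algebra axioms is matched to the right item of the lemma and that the base case $\Omega\in\A$ (or equivalently $\emptyset\in\A$) is not overlooked, since it does not follow from the union and complement closure alone without some starting point. Given that item~\ref{lem:LebesgueProps:it:probZeroMeasurable} supplies precisely that starting point, the proof is a two-line citation of the relevant items.
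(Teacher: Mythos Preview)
Your proposal is correct and follows essentially the same route as the paper: closure under union by item~\ref{lem:LebesgueProps:it:unionOfLebesgueIsLebesgue}, closure under complement by item~\ref{lem:LebesgueProps:it:complement}, and $\Omega\in\A$ via items~\ref{lem:LebesgueProps:it:probZeroMeasurable} and~\ref{lem:LebesgueProps:it:complement}. Your alternative direct verification that $\Omega$ is measurable is a harmless addition but not needed.
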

\begin{proof}
  $\A\in\parts{\Omega}$. Let $A,B\in\A$, then by Lemma~\ref{lem:LebesgueProps}(\ref{lem:LebesgueProps:it:unionOfLebesgueIsLebesgue}), $A\cup B\in\A$. By Lemma~\ref{lem:LebesgueProps}(\ref{lem:LebesgueProps:it:complement}), $\no{A}\in\A$. Finally, by Lemma~\ref{lem:LebesgueProps}(\ref{lem:LebesgueProps:it:probZeroMeasurable}) and (\ref{lem:LebesgueProps:it:complement}), $\Omega\in\A$.
\end{proof}

Moreover, we can show that $A$ is a $\sigma$-algebra, that is an algebra, completed to include countably infinite operations. Definition~\ref{def:salgebra} formalises it.

\begin{definition}[$\sigma$-algebra]\label{def:salgebra}
  Let $X$ be a set. We say that a set $\Sigma\in\parts{X}$ is a $\sigma$-algebra over $X$ if it is an algebra and it is closed under countable unions, that is, if
  $A_1, A_2, A_3,\dots$ are in $\Sigma$, then so is $\bigcup A_i$.
\end{definition}

Theorem \ref{thm:AsAlg} states that the set $\A$ of Lebesgue measurable sets is a $\sigma$-algebra. We need to prove two properties of Lebesgue measurable sets first (Lemmas~\ref{lem:sumOfUnion} and~\ref{lem:inftySumOfUnion}).

\begin{lemma}\label{lem:sumOfUnion}
  Let $S\subseteq\Omega$ and $A_1,\dots,A_n\in\A$ be a disjoint family. Then
  \[
    \Pb\left(S\cap\left(\bigcup_{i=1}^n A_i\right)\right)=\sum_{i=1}^n\Pb(S\cap A_i)\enspace.
  \]
\end{lemma}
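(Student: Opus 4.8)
The plan is to prove this by induction on $n$, exploiting the measurability of the $A_i$ and the disjointness hypothesis. The base case $n=1$ is immediate since $S\cap A_1 = S\cap A_1$. For the inductive step, suppose the claim holds for $n-1$ disjoint measurable sets. The key observation is that $A_n$ is measurable, so applying the defining property of measurability (Definition~\ref{def:lebesgue}) with the test set $S\cap\bigl(\bigcup_{i=1}^n A_i\bigr)$ gives
\[
  \Pb\left(S\cap\bigcup_{i=1}^n A_i\right) = \Pb\left(S\cap\bigcup_{i=1}^n A_i\cap A_n\right) + \Pb\left(S\cap\bigcup_{i=1}^n A_i\cap\no{A_n}\right).
\]

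The next step is to simplify each of the two terms using the disjointness of the family. Since the $A_i$ are pairwise disjoint, $\bigl(\bigcup_{i=1}^n A_i\bigr)\cap A_n = A_n$, so the first term is just $\Pb(S\cap A_n)$. For the second term, disjointness gives $\bigl(\bigcup_{i=1}^n A_i\bigr)\cap\no{A_n} = \bigcup_{i=1}^{n-1} A_i$ (each $A_i$ with $i<n$ is contained in $\no{A_n}$, while $A_n\cap\no{A_n}=\emptyset$), so the second term is $\Pb\bigl(S\cap\bigcup_{i=1}^{n-1} A_i\bigr)$. Applying the induction hypothesis to this last expression yields $\sum_{i=1}^{n-1}\Pb(S\cap A_i)$, and combining gives the desired total $\sum_{i=1}^n\Pb(S\cap A_i)$.

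I expect the main subtlety — though it is not really an obstacle — to be making sure the set-theoretic identities for the restricted family are handled carefully: one must use that $A_1,\dots,A_{n-1}$ remain a disjoint family (so the induction hypothesis applies) and that the disjointness with $A_n$ is what collapses the intersections cleanly. One should also note at the outset that a disjoint subfamily of a disjoint family is disjoint, which is what licenses the inductive call. No appeal to $\sigma$-additivity or to properties of $\Pb$ beyond Definition~\ref{def:lebesgue} and monotonicity from Lemma~\ref{lem:LebesgueProps} is needed; the argument is purely finite and combinatorial, driven by the measurability of $A_n$ at each stage.
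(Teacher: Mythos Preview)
Your proof is correct and follows essentially the same approach as the paper: induction on $n$, using the measurability of $A_n$ to split $\Pb\bigl(S\cap\bigcup_{i=1}^n A_i\bigr)$, then simplifying the two pieces via the disjointness identities $\bigl(\bigcup_{i=1}^n A_i\bigr)\cap A_n = A_n$ and $\bigl(\bigcup_{i=1}^n A_i\bigr)\cap\no{A_n} = \bigcup_{i=1}^{n-1} A_i$, and applying the induction hypothesis. (Minor note: monotonicity from Lemma~\ref{lem:LebesgueProps} is not actually used in this argument, only Definition~\ref{def:lebesgue}.)
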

\begin{proof}
  We proceed by induction on $n$. If $n=1$ it is trivial. Assume it is true for $n-1$.
  Notice that

  \parbox{0.4\textwidth}{
    \begin{equation}\label{lem:sumOfUnion:eq:int-one}
      S\cap\left(\bigcup_{i=1}^n A_i\right)\cap A_n=S\cap A_n
    \end{equation}
  }
  \hfill and\hfill
  \parbox{0.4\textwidth}{
    \begin{equation}\label{lem:sumOfUnion:eq:int-two}
      S\cap\left(\bigcup_{i=1}^n A_i\right)\cap \no{A_n}=S\cap\left(\bigcup_{i=1}^{n-1} A_i\right)
    \end{equation}
  }

  Equation (\ref{lem:sumOfUnion:eq:int-one}) is clear, and (\ref{lem:sumOfUnion:eq:int-two}) follows since $\left(\bigcup_{i=1}^n A_i\right)\cap\no{A_n}=\bigcup_{i=1}^n(A_i\cap\no{A_n})=(\bigcup_{i=1}^{n-1}(A_i\cap\no{A_n}))\cup(A_n\cap \no{A_n})=\bigcup_{i=1}^{n-1}(A_i\cap\no{A_n})$.

  Thus, since $A_n$ is measurable, we have that
  $$\Pb\left(S\cap\left(\bigcup_{i=1}^n A_i\right)\right) = \Pb\left(S\cap\left(\bigcup_{i=1}^n A_i\right)\cap A_n\right)+\Pb\left(S\cap\left(\bigcup_{i=1}^n A_i\right)\cap\no{A_n}\right)$$
  and from (\ref{lem:sumOfUnion:eq:int-one}) and (\ref{lem:sumOfUnion:eq:int-two}) this is equal to $\Pb(S\cap A_n)+\Pb\left(S\cap\left(\bigcup_{i=1}^{n-1} A_i\right)\right)$, which by the induction hypothesis is equal to $\sum_{i=1}^n\Pb(S\cap A_i)$.
\end{proof}

\begin{lemma}\label{lem:inftySumOfUnion}
  Let $S_1,S_2,\dots\subseteq\Omega$. Then
  \[
    \Pb\left(\bigcup_{i=1}^\infty S_i\right)\leq\sum_{i=1}^\infty\Pb(S_i)\enspace.
  \]
\end{lemma}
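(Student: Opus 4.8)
The plan is to prove the countable subadditivity of $\Pb$ directly from the definition of $\Pb$ as an infimum over countable box-covers, using a standard $\varepsilon/2^i$ argument. First I would dispose of the trivial cases: if any $S_i$ is not coverable (which cannot happen here, since $\Omega$ itself is a box), or more to the point, if the right-hand side $\sum_i \Pb(S_i)$ is $+\infty$ there is nothing to prove; so I may assume $\sum_i \Pb(S_i) < \infty$. I would also note that if some $S_i = \emptyset$ it contributes $0$ on both sides and can be ignored, so assume all $S_i$ are nonempty.

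Next, fix $\varepsilon > 0$. For each $i$, since $\Pb(S_i)$ is the infimum of $\sum_{B \in \mathcal{C}} \p(B)$ over countable families $\mathcal{C}$ of boxes covering $S_i$, there exists such a family $\mathcal{C}_i$ with
\[
  \sum_{B \in \mathcal{C}_i} \p(B) \leq \Pb(S_i) + \frac{\varepsilon}{2^i}\enspace.
\]
Now form $\mathcal{C} = \bigcup_{i=1}^\infty \mathcal{C}_i$. This is a countable union of countable families, hence countable, and it covers $\bigcup_i S_i$ since each $\mathcal{C}_i$ covers $S_i$. Therefore $\mathcal{C}$ is admissible in the infimum defining $\Pb\left(\bigcup_i S_i\right)$, so
\[
  \Pb\left(\bigcup_{i=1}^\infty S_i\right) \leq \sum_{B \in \mathcal{C}} \p(B) \leq \sum_{i=1}^\infty \sum_{B \in \mathcal{C}_i} \p(B) \leq \sum_{i=1}^\infty \left(\Pb(S_i) + \frac{\varepsilon}{2^i}\right) = \sum_{i=1}^\infty \Pb(S_i) + \varepsilon\enspace.
\]
Since this holds for every $\varepsilon > 0$, we conclude $\Pb\left(\bigcup_i S_i\right) \leq \sum_i \Pb(S_i)$.

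The one genuine subtlety — the step I would be most careful about — is the inequality $\sum_{B \in \mathcal{C}} \p(B) \leq \sum_i \sum_{B \in \mathcal{C}_i} \p(B)$. The family $\mathcal{C}$ is a union, not a disjoint union: the same box could appear in several $\mathcal{C}_i$, and a box in $\mathcal{C}$ might be counted once on the left but multiple times on the right. Since all terms $\p(B) \geq 0$, this only makes the right-hand side larger, so the inequality is safe; but one should state that $\mathcal{C}$ is being treated as a set of boxes (so each box counted once) while the double sum on the right may overcount, and that nonnegativity of $\p$ is what licenses the comparison. A cleaner way to say it is: enumerate $\mathcal{C}$ as $B_1, B_2, \dots$ and observe that each $\p(B_k)$ appears at least once among the terms of the double sum $\sum_i\sum_{B\in\mathcal{C}_i}\p(B)$. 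Beyond that, everything is routine manipulation of infima and nonnegative series, and countability of a countable union of countable sets.
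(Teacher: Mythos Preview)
Your proof is correct and is essentially the same $\varepsilon/2^i$ outer-measure argument the paper uses. The only cosmetic difference is that the paper indexes the covering boxes as a double sequence $B_{ij}$ and writes $\sum_i\sum_j \p(B_{ij})$ directly, which sidesteps the overcounting subtlety you flagged; your version with set-valued families $\mathcal{C}_i$ requires the extra remark about nonnegativity, but it comes to the same thing.
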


\begin{proof}
  If $\Pb(S_i)=\infty$ for some $i$, then we are finished. Therefore, assume $\Pb(S_i)<\infty$ for each $i\in\mathbb{N}$.

  Without lost of generality, assume $S_i\neq\emptyset$, for all $i$. Indeed, since $\Pb(\emptyset)=0$, an empty set would not add anything to any side of the equation.
  For a given $\varepsilon >0$ and $i$, there is a sequence $\{B_{ij}~|~i=1,\dots, j=1,\dots\}$ of boxes such that $S_i\subseteq\bigcup_{j=1}^\infty B_{ij}$ and $\sum_{j=1}^\infty \p(B_{ij})<\Pb(S_i)+2^{-i}\varepsilon$, by the definition of $\Pb$. Now, $\#\{B_{ij}~|~i,j\}\leq\aleph_0$ and $\bigcup_{i=1}^\infty S_i\subseteq\bigcup_{i=1}^\infty\bigcup_{j=1}^\infty B_{ij}$. Therefore, using the definition of \Pb,
  $$ \Pb\left(\bigcup_{i=1}^\infty S_i\right)\leq\sum_{i=1}^\infty\sum_{j=1}^\infty \p(B_{ij})
  \leq \sum_{i=1}^\infty\Pb(S_i)+\varepsilon\sum_{i=1}^\infty\frac{1}{2^i}=\sum_{i=1}^\infty\Pb(S_i)+\varepsilon$$
  Since this is true for each $\varepsilon$, the lemma holds.
\end{proof}

Using these properties, we can prove that $\A$ is a $\sigma$-algebra (Theorem~\ref{thm:AsAlg}).

\begin{theorem}\label{thm:AsAlg}
  $\A$ is a $\sigma$-algebra over $\Omega$.
\end{theorem}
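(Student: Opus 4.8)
The plan is to show that $\A$ is closed under countable unions, since by Corollary~\ref{cor:AAlg} we already know $\A$ is an algebra (so it contains $\Omega$, is closed under complement and under finite unions). So let $A_1, A_2, \dots \in \A$ and set $A = \bigcup_{i=1}^\infty A_i$; I must show $A \in \A$, i.e. for every $S \in \parts{\Omega}$,
\[
  \Pb(S) = \Pb(S \cap A) + \Pb(S \cap \no{A}).
\]
The inequality $\Pb(S) \leq \Pb(S \cap A) + \Pb(S \cap \no{A})$ is immediate from Lemma~\ref{lem:inftySumOfUnion} (subadditivity applied to the two-set cover $S = (S\cap A) \cup (S \cap \no{A})$), so the real work is the reverse inequality $\Pb(S) \geq \Pb(S \cap A) + \Pb(S \cap \no{A})$.

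The standard trick I would use is to disjointify: put $A_1' = A_1$ and $A_n' = A_n \setminus \bigcup_{i=1}^{n-1} A_i = A_n \cap \bigcap_{i=1}^{n-1}\no{A_i}$. Each $A_n'$ lies in $\A$ because $\A$ is an algebra (finite unions, complements, hence finite intersections), the $A_n'$ are pairwise disjoint, and $\bigcup_{n=1}^\infty A_n' = \bigcup_{n=1}^\infty A_n = A$. Write $G_n = \bigcup_{i=1}^n A_i'$, which is measurable and also equals $\bigcup_{i=1}^n A_i$. Since $G_n \in \A$,
\[
  \Pb(S) = \Pb(S \cap G_n) + \Pb(S \cap \no{G_n}).
\]
Now apply Lemma~\ref{lem:sumOfUnion} to the finite disjoint family $A_1', \dots, A_n'$ to get $\Pb(S \cap G_n) = \sum_{i=1}^n \Pb(S \cap A_i')$. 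Also, since $A \supseteq G_n$ we have $\no{A} \subseteq \no{G_n}$, and $\Pb$ is monotone on arbitrary subsets (directly from its definition as an infimum over covers, or via Lemma~\ref{lem:LebesgueProps}(\ref{lem:LebesgueProps:it:inclusion}) after noting $S\cap\no{A}$ and $S\cap\no{G_n}$ need not be in $\A$ — so I would just reprove monotonicity of $\Pb$ on $\parts{\Omega}$ in one line, as any cover of the larger set covers the smaller). Hence
\[
  \Pb(S) \geq \sum_{i=1}^n \Pb(S \cap A_i') + \Pb(S \cap \no{A}).
\]
Letting $n \to \infty$ gives $\Pb(S) \geq \sum_{i=1}^\infty \Pb(S \cap A_i') + \Pb(S \cap \no{A})$. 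Finally, $S \cap A = \bigcup_{i=1}^\infty (S \cap A_i')$, so by Lemma~\ref{lem:inftySumOfUnion}, $\Pb(S \cap A) \leq \sum_{i=1}^\infty \Pb(S \cap A_i')$, and therefore $\Pb(S) \geq \Pb(S \cap A) + \Pb(S \cap \no{A})$. Combined with the easy direction, this yields equality, so $A \in \A$.

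The main obstacle — really the only delicate point — is the passage to the limit: I need the partial-sum bound $\Pb(S) \geq \sum_{i=1}^n \Pb(S\cap A_i') + \Pb(S\cap\no{A})$ to hold uniformly in $n$ (so that $n\to\infty$ is legitimate), which is why it is essential to bound $\Pb(S\cap\no{G_n})$ from below by the $n$-independent quantity $\Pb(S\cap\no{A})$ rather than leaving it as is. The other small thing to be careful about is that $\Pb$ is only finitely/countably subadditive and monotone on all of $\parts{\Omega}$ but additive only across measurable sets; that is exactly the asymmetry that forces the disjointification through the measurable sets $A_i'$ and $G_n$ rather than working with the $A_i$ directly.
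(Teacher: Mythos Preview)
Your proof is correct and follows essentially the same route as the paper: disjointify the given sequence, use measurability of the finite unions $G_n$ (the paper calls them $C_n$) together with Lemma~\ref{lem:sumOfUnion} to get the partial-sum lower bound, pass to the limit, and then apply Lemma~\ref{lem:inftySumOfUnion} for both the upper bound on $\Pb(S\cap A)$ and the easy subadditivity direction. One point where you are actually more careful than the paper: the paper invokes Lemma~\ref{lem:LebesgueProps}(\ref{lem:LebesgueProps:it:inclusion}) for the monotonicity step $\Pb(S\cap\no{C_n})\geq\Pb(S\cap\no{A})$, but that lemma is stated only for sets in $\A$, whereas $S\cap\no{A}$ is not yet known to be measurable; your observation that monotonicity of $\Pb$ on all of $\parts{\Omega}$ follows directly from the infimum-over-covers definition is the correct justification.
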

{\begin{proof}
  By Corollary~\ref{cor:AAlg}, $\A$ is an algebra. 
  We only have to prove that $\A$ is closed under any countable unions. That is, if $B_1,B_2,\dots\in\A$, then $\bigcup_{i=1}^{\infty} B_i\in\A$.
  Since $\A$ is an algebra (Corollary~\ref{cor:AAlg}), there is a disjoint family $A_1,A_2,\dots\in\A$ such that
  $A=\bigcup_{i=1}^{\infty}B_i=\bigcup_{i=1}^{\infty}A_i$.
  For example, we can take $A_1=B_1$, $A_2=B_2\setminus B_1, A_3=B_3\setminus(B_1\cup B_2),\dots$.
  Let $C_n=\bigcup_{i=1}^n A_i$, so $C_n\in\A$ again using that $\A$ is an algebra. Also notice that $\no{A}\subseteq\no{C_n}$ because $C_n\subseteq A$.

  Since $C_n$ is measurable, take any $S\subseteq\Omega$ and, using Lemma~\ref{lem:LebesgueProps}(\ref{lem:LebesgueProps:it:inclusion}), we can calculate
  $\Pb(S) = \Pb(S\cap C_n)+\Pb(S\cap \no{C_n})\geq\Pb(S\cap C_n)+\Pb(S\cap \no{A})$.
  Since $\Pb(S\cap C_n)=\Pb\left(S\cap\left(\bigcup_{i=1}^n A_i\right)\right)$, using Lemma~\ref{lem:sumOfUnion}, we obtain
  $\Pb(S)\geq\sum_{i=1}^n\Pb(S\cap A_i)+\Pb(S\cap\no{A})$
  and, since the left-hand side is independent of $n$,
  $\Pb(S)\geq\sum_{i=1}^\infty\Pb(S\cap A_i)+\Pb(S\cap\no{A})$.
  Thus, by Lemma~\ref{lem:inftySumOfUnion},
  $\Pb(S)\geq\Pb\left(S\cap\left(\bigcup_{i=1}^\infty A_i\right)\right)+\Pb(S\cap\no{A})=\Pb(S\cap A)+\Pb(S\cap\no{A})$.

  For the converse inequality, notice that $S=(S\cap A)\cup(S\cap\no{A})$, so using Lemma~\ref{lem:inftySumOfUnion} we have
  $\Pb(S)=\Pb((S\cap A)\cup(S\cap\no{A}))\leq\Pb(S\cap A)+\Pb(S\cap\no{A})$.
  Hence, $A\in\A$.
\end{proof}}

As intuited in Example~\ref{ex:probOfBox}, the probability of a box $B$ is $\p(B)$. Lemma~\ref{lem:Peqp} formalises it. Before proving this lemma, we need two auxiliary ones (Lemmas~\ref{lem:smallpintersection} and \ref{lem:smallpintersections}). For short, we use the notation $B\cap\ve a=\ve b$ for $B\cap\{f~|~f(\ve a)=\ve b\}$.

\begin{lemma}\label{lem:smallpintersection}
  Let $N\subseteq\mathbb{N}$ and for all $i\in N$, let $B, B_i\subseteq\Omega$ be boxes s.t. $B\subseteq\bigcup_{i\in N}B_i$ and $\p(B)>\sum_{i\in N}\p(B_i)$. Then for every object $\ve a$, there exists an object $\ve b$ such that,
  $\p(B\cap\ve a=\ve b)>\sum_{i\in N}\p(B_i\cap\ve a=\ve b)$.
\end{lemma}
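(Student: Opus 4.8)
The plan is to argue by a normalization/averaging argument on the fixed object $\ve a$. Fix an object $\ve a$. The idea is that each box $B_i$ (and $B$ itself) either constrains $f(\ve a)$ to be some specific value, or leaves $f(\ve a)$ free. Partition the index set $N$ accordingly, and also split $B$ into its ``slices'' $B \cap \ve a = \ve b$ as $\ve b$ ranges over the reducts of $\ve a$ (if $\ve a \notin \Lambda^+$, the statement is vacuous or trivial since $\ve a = \ve b$ is impossible as a constraint; one should note $\ve a \in \Lambda^+$ is the interesting case, and there $B = \bigcup_{\ve b : \ve a \to \ve b} (B \cap \ve a = \ve b)$ is a finite disjoint union). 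The first key observation is a splitting identity for $\p$: for a box $B$ and a reduct $\ve b$ of $\ve a$, one has $\p(B \cap \ve a = \ve b) = \p(B) \cdot \frac{\to(\ve a,\ve b)}{\rho(\ve a)}$ when $B$ does not already constrain $f(\ve a)$, and $\p(B\cap\ve a=\ve b)$ equals $\p(B)$ if $B$ already forces $f(\ve a)=\ve b$ and equals $0$ if $B$ forces $f(\ve a)$ to a different value. In all cases, $\sum_{\ve b : \ve a \to \ve b} \p(B \cap \ve a = \ve b) = \p(B)$, and more importantly, each individual slice satisfies $\p(B\cap\ve a=\ve b) \le \p(B)\cdot\frac{\to(\ve a,\ve b)}{\rho(\ve a)}$ with equality unless $B$ forces a clashing value.

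Next, from $B \subseteq \bigcup_{i\in N} B_i$ I would deduce $B \cap \ve a = \ve b \subseteq \bigcup_{i\in N}(B_i \cap \ve a = \ve b)$ for every $\ve b$. Summing the claimed conclusion over all reducts $\ve b$ of $\ve a$ and using the identity $\sum_{\ve b}\p(B\cap\ve a=\ve b) = \p(B)$ (and likewise for each $B_i$), we get
\[
  \sum_{\ve b : \ve a\to\ve b}\left(\p(B\cap\ve a=\ve b) - \sum_{i\in N}\p(B_i\cap\ve a=\ve b)\right) = \p(B) - \sum_{i\in N}\p(B_i) > 0.
\]
Since a sum of finitely many reals is positive, at least one summand is positive, i.e.\ there exists $\ve b$ with $\p(B\cap\ve a=\ve b) > \sum_{i\in N}\p(B_i\cap\ve a=\ve b)$. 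This is exactly the conclusion. (One must check the interchange of the two sums $\sum_{\ve b}\sum_{i\in N}$ is legitimate: the outer sum is finite since $\rho(\ve a)<\infty$, and each inner sum converges because it is bounded by $\sum_{i\in N}\p(B_i) \le \p(B) < \infty$, so Tonelli applies to the nonnegative terms.)

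The main obstacle I expect is the bookkeeping in the splitting identity $\sum_{\ve b}\p(B\cap\ve a=\ve b) = \p(B)$ and its analogue for the $B_i$: one has to carefully case-split on whether the box already mentions the condition $f(\ve a)=\cdot$ in its defining list, handle the empty-box case (if $B_i \cap \ve a = \ve b$ is empty its $\p$ should be taken as $0$, consistent with Definition~\ref{def:prob} only extending $\p$ to nonempty boxes — here one works with $\Pb$ or simply defines $\p(\emptyset)=0$ for this computation), and make sure the identity $\sum_{\ve b}\frac{\to(\ve a,\ve b)}{\rho(\ve a)}=1$ is invoked correctly. Everything else is routine once that identity is in place.
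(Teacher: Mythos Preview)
Your proposal is correct and follows essentially the same approach as the paper: both arguments rest on the splitting identity $\p(B)=\sum_{\ve b:\ve a\to\ve b}\p(B\cap\ve a=\ve b)$, apply it to $B$ and to each $B_i$, interchange the (finite outer, countable inner) sums, and conclude by the pigeonhole observation that a finite sum being positive forces some summand to be positive. Your write-up is simply more explicit about the bookkeeping (the case split on whether a box already constrains $\ve a$, the empty-box convention, and the justification for swapping $\sum_{\ve b}$ and $\sum_{i\in N}$); note also that neither your argument nor the paper's actually uses the inclusion $B\subseteq\bigcup_i B_i$ at this stage, so the line where you deduce $B\cap\ve a=\ve b\subseteq\bigcup_i(B_i\cap\ve a=\ve b)$ is superfluous here.
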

\begin{proof}
  Let $\ve a\to\ve b_i$, with $i=1,\dots,n$. Hence notice that $\p(B)=\sum_{j=1}^n \p(B\cap\ve a=\ve b_j)$, and this happens for any $B$. Then, from $\p(B)>\sum_{i\in N}\p(B_i)$, we have
  $\sum_{j=1}^n \p(B\cap\ve a=\ve b_j)
  > \sum_{i\in N}\sum_{j=1}^n \p(B_i\cap\ve a=\ve b_j)\\
  =\sum_{j=1}^n \sum_{i\in N}\p(B_i\cap\ve a=\ve b_j)$.
  Therefore, there must be at least one $h$ such that
  $\p(B\cap\ve a=\ve b_h)>\sum_{i\in N}\p(B_i\cap\ve a=\ve b_h)$.
\end{proof}

\begin{lemma}\label{lem:smallpintersections}
  Let $N\subseteq\mathbb{N}$ and for all $i\in N$, let $B, B_i\subseteq\Omega$ be boxes s.t. $B\subseteq\bigcup_{i\in N}B_i$ and $\p(B)>\sum_{i\in N}\p(B_i)$. Then for all family $\{\ve a_j\}$ of objects, there exists a family $\{\ve b_j\}$ such that, for every $k$,\
  $\p(B\cap\ve a_1=\ve b_1\cap\cdots\cap\ve a_k=\ve b_k)>\sum_{i\in N}\p(B_i\cap\ve a_1=\ve b_1\cap\cdots\cap\ve a_k=\ve b_k)$.
\end{lemma}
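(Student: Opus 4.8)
The plan is to iterate Lemma~\ref{lem:smallpintersection}: I build the family $\{\ve b_j\}$ one object at a time, each $\ve b_{k+1}$ chosen compatibly with $\ve b_1,\dots,\ve b_k$, and I verify by induction that the desired inequality holds at every stage simultaneously. Write $B^{(0)} = B$, $B_i^{(0)} = B_i$, and in general $B^{(k)} = B \cap \ve a_1 = \ve b_1 \cap \cdots \cap \ve a_k = \ve b_k$, $B_i^{(k)} = B_i \cap \ve a_1 = \ve b_1 \cap \cdots \cap \ve a_k = \ve b_k$. The induction hypothesis at stage $k$ is that $\ve b_1,\dots,\ve b_k$ have been chosen with $B^{(k)} \subseteq \bigcup_{i\in N}B_i^{(k)}$ and $\p(B^{(k)}) > \sum_{i\in N}\p(B_i^{(k)})$; the second half is exactly the inequality the lemma asks for at index $k$, and stage $0$ is just the hypothesis of the lemma.

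For the step from $k$ to $k+1$, the two conditions just stated are precisely the hypotheses of Lemma~\ref{lem:smallpintersection} for the box $B^{(k)}$, the boxes $B_i^{(k)}$, and the object $\ve a_{k+1}$, so that lemma hands us an object $\ve b_{k+1}$ with $\p(B^{(k)} \cap \ve a_{k+1} = \ve b_{k+1}) > \sum_{i\in N}\p(B_i^{(k)} \cap \ve a_{k+1} = \ve b_{k+1})$, i.e. $\p(B^{(k+1)}) > \sum_{i\in N}\p(B_i^{(k+1)})$. The companion inclusion $B^{(k+1)} \subseteq \bigcup_{i\in N}B_i^{(k+1)}$ comes for free, since it is obtained from $B^{(k)} \subseteq \bigcup_{i\in N}B_i^{(k)}$ by intersecting both sides with $\{f \mid f(\ve a_{k+1}) = \ve b_{k+1}\}$ and intersection distributes over union. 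If the family $\{\ve a_j\}$ is infinite the recursion runs through all of $\mathbb{N}$ and yields an infinite family $\{\ve b_j\}$; if it is finite, it stops after finitely many steps.

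The things to handle carefully are bookkeeping ones. First, $\p$ must make sense on every set involved: each $B^{(k)}$ is nonempty (its $\p$-value is strictly positive, the right-hand side being $\geq 0$), hence a genuine box, and for the possibly-empty intersections $B_i^{(k)}$ I would use the convention $\p(\emptyset) = 0$ already implicit in the proof of Lemma~\ref{lem:smallpintersection}. Second, if $\ve a_{k+1}$ coincides with some earlier $\ve a_j$, then Lemma~\ref{lem:smallpintersection} is forced to return the value $\ve b_j$ already fixed for it, for otherwise $B^{(k+1)}$ would be empty, contradicting $\p(B^{(k+1)}) > 0$; so the construction never degenerates and the accumulated conditions stay consistent. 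I do not expect a genuine obstacle: all the real content sits in Lemma~\ref{lem:smallpintersection}, and the present statement is just its iteration.
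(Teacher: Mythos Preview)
Your proposal is correct and follows essentially the same route as the paper: induction on $k$, invoking Lemma~\ref{lem:smallpintersection} at each step to produce the next $\ve b_{k+1}$. The paper's own proof is two lines and omits the bookkeeping you spell out (maintaining the inclusion $B^{(k)}\subseteq\bigcup_i B_i^{(k)}$, nonemptiness, the case of repeated $\ve a_j$), but the underlying argument is identical; note also that the inclusion hypothesis in Lemma~\ref{lem:smallpintersection} is never actually used in its proof, so carrying it along is harmless but not strictly required.
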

\begin{proof}
  We proceed by induction on $k$. For $k=1$, use Lemma~\ref{lem:smallpintersection}.
  By the induction hypothesis, we have
  $\p(B\cap\ve a_1=\ve b_1\cap\cdots\cap\ve a_{k-1}=\ve b_{k-1})>\sum_{i\in N}\p(B_i\cap\ve a_1=\ve b_1\cap\cdots\cap\ve a_{k-1}=\ve b_{k-1})$. We conclude by Lemma~\ref{lem:smallpintersection}.
\end{proof}

\begin{lemma}\label{lem:Peqp}
  Let $B\subseteq\Omega$ be a box, then $\Pb(B)=\p(B)$.
\end{lemma}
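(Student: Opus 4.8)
\textbf{Proof plan for Lemma~\ref{lem:Peqp}.}

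The plan is to prove the two inequalities $\Pb(B)\leq\p(B)$ and $\Pb(B)\geq\p(B)$ separately. The first is immediate: since $\{B\}$ is itself a countable family of boxes covering $B$, the infimum defining $\Pb(B)$ is at most $\p(B)$. The content of the lemma is therefore the reverse inequality $\Pb(B)\geq\p(B)$, i.e.\ that no countable cover of $B$ by boxes can do better than the trivial one. Equivalently, I want to show: if $\mathcal{C}=\{B_i\}_{i\in N}$ (with $N\subseteq\mathbb{N}$) is a countable family of boxes with $B\subseteq\bigcup_{i\in N}B_i$, then $\p(B)\leq\sum_{i\in N}\p(B_i)$.

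First I would argue by contradiction: suppose some cover satisfies $\p(B)>\sum_{i\in N}\p(B_i)$. The idea is to use Lemma~\ref{lem:smallpintersections} to ``refine'' $B$ down along the finitely many objects constrained by the $B_i$'s (together with those already constrained by $B$), obtaining a descending chain of conditions that, in the limit, pins down a single strategy $f\in B$ which escapes every $B_i$. Concretely, let $\ve a_1,\ve a_2,\dots$ enumerate all the objects appearing as conditions in $B$ or in any $B_i$ (this is a countable set since $N$ is countable and each box has finitely many conditions). Lemma~\ref{lem:smallpintersections} gives a family $\{\ve b_j\}$ with $\ve a_j\to\ve b_j$ such that for every $k$, $\p(B\cap\ve a_1{=}\ve b_1\cap\cdots\cap\ve a_k{=}\ve b_k)>\sum_{i\in N}\p(B_i\cap\ve a_1{=}\ve b_1\cap\cdots\cap\ve a_k{=}\ve b_k)$. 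In particular, for each $k$ the left-hand side is strictly positive, so the box $B\cap\ve a_1{=}\ve b_1\cap\cdots\cap\ve a_k{=}\ve b_k$ is nonempty; and for each $i$ there is some $k_i$ such that $B_i\cap\ve a_1{=}\ve b_1\cap\cdots\cap\ve a_{k_i}{=}\ve b_{k_i}=\emptyset$ (as soon as $k_i$ is large enough to include all of $B_i$'s conditions, either they agree with the chosen $\ve b_j$'s — but then this intersection would be a nonempty sub-box of $B_i$ and of $B$ refined, and summing these over a cover cannot stay below — or they disagree and the intersection is empty; the strict inequality forces the empty case for each individual $i$, since if it were nonempty for some $i$ we could not have the strict inequality persist; this needs a short careful check). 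Then the strategy $f$ defined by $f(\ve a_j)=\ve b_j$ for all $j$ and by any admissible choice on the remaining (unconstrained) objects of $\Lambda^+$ lies in $B$ but in no $B_i$, contradicting $B\subseteq\bigcup_{i\in N}B_i$.

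The main obstacle I anticipate is making the limiting argument rigorous, specifically showing that the descending chain of refined conditions is eventually disjoint from each $B_i$ and that a genuine strategy $f\in B\setminus\bigcup B_i$ can be extracted. The subtlety is twofold: one must verify that each refined box $B\cap\ve a_1{=}\ve b_1\cap\cdots\cap\ve a_k{=}\ve b_k$ is consistent (the conditions don't contradict $B$'s own conditions — but this is exactly why Lemma~\ref{lem:smallpintersection} was phrased so that $\ve a_j\to\ve b_j$ and the $\p$-value stays positive), and that ``escaping all $B_i$'' really holds rather than just ``escaping each $B_i$ individually at different stages.'' Since each $B_i$ has only finitely many conditions and the enumeration $\ve a_1,\ve a_2,\dots$ eventually lists all of them, disagreement with $B_i$ is detected at a finite stage $k_i$, and once detected it persists; so $f\notin B_i$ for every $i$. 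I would also need the easy observation, used implicitly, that $\p(B')=\sum_j\p(B'\cap\ve a{=}\ve b_j)$ for any box $B'$ and any object $\ve a$ with reducts $\ve b_1,\dots,\ve b_n$ (already established inside the proof of Lemma~\ref{lem:smallpintersection}), which guarantees that refining never loses mass and keeps the strict inequality available at every stage. Once the contradiction is reached, $\p(B)\leq\sum_{i\in N}\p(B_i)$ for every countable cover, hence $\Pb(B)\geq\p(B)$, and combined with the trivial direction we conclude $\Pb(B)=\p(B)$.
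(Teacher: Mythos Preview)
Your approach is correct and shares its core with the paper's: both invoke Lemma~\ref{lem:smallpintersections} under the contradictory hypothesis $\p(B)>\sum_i\p(B_i)$ to manufacture a strategy lying in $B$ but outside every $B_i$. The difference is organizational. The paper proceeds by induction on the number $n$ of constraints defining $B$: its base case $n=0$ (so $B=\Omega$) is exactly your contradiction argument, and an inductive step strips one constraint from $B$ to reduce to $n-1$. You instead run the contradiction directly for arbitrary $B$ by including $B$'s own constrained objects in the enumeration $\ve a_1,\ve a_2,\dots$; since each refined box $B\cap\ve a_1{=}\ve b_1\cap\cdots\cap\ve a_k{=}\ve b_k$ has positive $\p$-value and hence is nonempty, the chosen $\ve b_j$'s must agree with $B$'s constraints, so the limit strategy $f$ automatically lies in $B$. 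This bypasses the induction entirely and is slightly cleaner---the paper's inductive step is terse about how a cover of $B$ induces a suitable cover of the relaxed box $B'$, whereas your direct route avoids that bookkeeping. Your ``short careful check'' is exactly the paper's base-case computation: if $f\in B_j$, then once $k$ is large enough to list all constraints of both $B$ and $B_j$, the boxes $B\cap(\ve a_1{=}\ve b_1\cap\cdots\cap\ve a_k{=}\ve b_k)$ and $B_j\cap(\ve a_1{=}\ve b_1\cap\cdots\cap\ve a_k{=}\ve b_k)$ coincide, giving $\p(B\cap\cdots)=\p(B_j\cap\cdots)\leq\sum_i\p(B_i\cap\cdots)$, which contradicts the strict inequality supplied by Lemma~\ref{lem:smallpintersections}.
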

\begin{proof}
  Let $B=\{f~|~f(\ve a_1)=\ve a_1',\dots,f(\ve a_n)=\ve a_n'\}$. Since $B\subseteq B$, by definition of $\Pb$, we have $\Pb(B)\leq \p(B)$.
  We must prove $\p(B)\leq\Pb(B)=\inf\{\sum_{i\in N}\p(B_i)~|~B\subseteq\bigcup_{i\in N}B_i\}$. In other words, we must prove that $B\subseteq\bigcup_{i\in N}B_i$ implies $\p(B)\leq \sum_{i\in N}\p(B_i)$.
  We proceed by induction on $n$.
  \begin{itemize}
    \item If $n=0$, $\p(B)=1$. Notice that, without restrictions in $B$, $B=\Omega$. We prove this case by contradiction. Let $\p(F)>\sum_{i\in N}\p(B_i)$. Then by Lemma~\ref{lem:smallpintersections}, there exists $g$ such that for all $k$,
      \begin{equation}\label{eq:one}
	\p(\ve a_1=g(\ve a_1)\cap\cdots\cap\ve a_k=g(\ve a_k))>\sum_{i\in N}\p(B_i\cap \ve a_1=g(\ve a_1)\cap\cdots\cap\ve a_k=g(\ve a_k))
      \end{equation}
      Since $g\in\Omega\subseteq\bigcup_{i\in N}B_i$, there exists $j$ such that $g\in B_j$. Let $B_j$ be defined with constraints on objects $\ve a_{j_1},\dots,\ve a_{j_q}$. Let $k=q$ and from equation (\ref{eq:one}),
      \begin{equation}\label{eq:two}
	\p(\ve a_1=g(\ve a_1)\cap\cdots\cap\ve a_q=g(\ve a_q))>\sum_{i\in N}\p(B_i\cap \ve a_1=g(\ve a_1)\cap\cdots\cap\ve a_q=g(\ve a_q))
      \end{equation}
      We know that 
      $\p(\ve a_1=g(\ve a_1)\cap\cdots\cap\ve a_q=g(\ve a_q))={\prod_{h=1}^q\frac{\to(\ve a_h,g(\ve a_h))}{\rho(\ve a_h)}}$, and since $g\in B_j$, we know that this is also equal to
      $\p(B_j\cap \ve a_1=g(\ve a_1)\cap\cdots\cap\ve a_q=g(\ve a_q))$. Hence equation (\ref{eq:two}) leads to a contradiction.

    \item Consider the case $n-1$. Let $B'=\{f~|~\exists g\in B$ s.t. $\forall\ve a\neq\ve a_n,\ f(\ve a)=g(\ve a)\}$. Then if $B'\subseteq\bigcup_{i\in N'}B'_i$ we have $\p(B')\leq\sum_{i\in N'}\p(B'_i)$.
      Notice that either $B_i'=B_i$ or $B_i$ has a constraint on $\ve a_n$ and so $\frac{\to(\ve a_n,g(\ve a_n))}{\rho(\ve a_n)}\p(B_i')=\p(B_i)$. In any case, $\frac{\to(\ve a_n,g(\ve a_n))}{\rho(\ve a_n)}\p(B'_i)\leq \p(B_i)$.
      Then
      $\p(B)=\frac{\to(\ve a_n,g(\ve a_n))}{\rho(\ve a_n)}\p(B')\leq\sum_{i\in N'}\frac{\to(\ve a_n,g(\ve a_n))}{\rho(\ve a_n)}\p(B'_i)\leq\sum_{i\in N'}\p(B_i)$.\qedhere
  \end{itemize}
\end{proof}

\begin{theorem}[Space of strategies]\label{thm:probabilitySpace}
  $(\Omega,\A,\Pb)$ is a probability space.
\end{theorem}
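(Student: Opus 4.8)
The plan is to verify directly that the triplet $(\Omega,\A,\Pb)$ satisfies the Kolmogorov axioms, relying on the machinery already assembled. First, $\A$ is a $\sigma$-algebra over $\Omega$ by Theorem~\ref{thm:AsAlg}, so $\Omega\in\A$, $\A$ is closed under complements, and it is closed under countable unions; this is exactly the structure required of the event space.

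Next I would check that $\Pb$ restricts to a well-defined probability measure on $\A$. Non-negativity, $\Pb(A)\geq 0$, is immediate from the definition of $\Pb$, which is either $0$ or an infimum of sums of non-negative reals. For the normalisation, note that $\Omega$ is the box with no constraints, so in the notation of Definition~\ref{def:prob} we have $n=0$ and $\p(\Omega)=1$ by the convention on the empty product; hence by Lemma~\ref{lem:Peqp} we get $\Pb(\Omega)=\p(\Omega)=1$. The upper bound $\Pb(A)\leq 1$ for every $A\in\A$ then follows from monotonicity: since $A\subseteq\Omega$ and both sets are measurable, Lemma~\ref{lem:LebesgueProps}(\ref{lem:LebesgueProps:it:inclusion}) gives $\Pb(A)\leq\Pb(\Omega)=1$.

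It remains to prove countable additivity. Let $A_1,A_2,\dots\in\A$ be pairwise disjoint and put $A=\bigcup_{i=1}^\infty A_i$; by Theorem~\ref{thm:AsAlg}, $A\in\A$. One inequality, $\Pb(A)\leq\sum_{i=1}^\infty\Pb(A_i)$, is a direct instance of Lemma~\ref{lem:inftySumOfUnion}. For the reverse inequality, apply Lemma~\ref{lem:sumOfUnion} with $S=\Omega$ to obtain $\Pb\!\left(\bigcup_{i=1}^n A_i\right)=\sum_{i=1}^n\Pb(A_i)$ for every $n$; since $\bigcup_{i=1}^n A_i\subseteq A$ and both are measurable, Lemma~\ref{lem:LebesgueProps}(\ref{lem:LebesgueProps:it:inclusion}) yields $\sum_{i=1}^n\Pb(A_i)\leq\Pb(A)$, and letting $n\to\infty$ gives $\sum_{i=1}^\infty\Pb(A_i)\leq\Pb(A)$. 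Combining the two inequalities establishes $\Pb(A)=\sum_{i=1}^\infty\Pb(A_i)$, which completes the proof.

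I do not expect a serious obstacle here: all the heavy lifting has already been done in Lemmas~\ref{lem:LebesgueProps}, \ref{lem:sumOfUnion}, \ref{lem:inftySumOfUnion}, \ref{lem:Peqp} and Theorem~\ref{thm:AsAlg}. The only point that needs a little care is making sure every application of monotonicity and of Lemma~\ref{lem:sumOfUnion} is made with all the sets involved genuinely in $\A$, which is guaranteed because $\A$ is a $\sigma$-algebra, so finite and countable unions of measurable sets stay measurable. If one wished to avoid invoking Lemma~\ref{lem:Peqp} for the normalisation, one could alternatively observe directly that $\{\Omega\}$ covers $\Omega$ with $\p(\Omega)=1$ and that no countable cover can achieve a smaller sum, but citing Lemma~\ref{lem:Peqp} is cleaner.
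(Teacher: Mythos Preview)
Your proposal is correct. The treatment of the first two axioms is essentially identical to the paper's: non-negativity from the definition, $\Pb(\Omega)=1$ via Lemma~\ref{lem:Peqp}, and the upper bound from $A\subseteq\Omega$ (the paper phrases this last step through the measurability identity $1=\Pb(A)+\Pb(\Omega\setminus A)$ rather than through Lemma~\ref{lem:LebesgueProps}(\ref{lem:LebesgueProps:it:inclusion}), but the content is the same).

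For countable additivity you take a genuinely different route. The paper argues by peeling off one set at a time: using measurability of $A_n$ it shows $\Pb\bigl(\bigcup_{i\geq 1}A_i\bigr)=\Pb(A_1)+\Pb\bigl(\bigcup_{i\geq 2}A_i\bigr)=\Pb(A_1)+\Pb(A_2)+\Pb\bigl(\bigcup_{i\geq 3}A_i\bigr)=\cdots$, and then concludes. Your approach instead sandwiches $\Pb(A)$ between $\sum_{i=1}^\infty\Pb(A_i)$ from below (via Lemma~\ref{lem:sumOfUnion} with $S=\Omega$, monotonicity, and passing to the limit in $n$) and from above (via Lemma~\ref{lem:inftySumOfUnion}). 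Your argument is arguably cleaner and more self-contained: it makes the limiting step explicit, whereas the paper's telescoping display ends in an ellipsis that tacitly assumes the tail $\Pb\bigl(\bigcup_{i\geq n}A_i\bigr)$ tends to $0$, a point your version does not need. The paper's route, on the other hand, uses only the Carath\'eodory definition of measurability and avoids invoking Lemma~\ref{lem:inftySumOfUnion} a second time.
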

\begin{proof} We prove it satisfies the Kolmogorov axioms.
  \begin{description}
    \item[1$^{\mbox{st}}$ axiom:] $\forall A\in\A$, $0\leq \Pb(A)\leq 1$.

      Since $\Pb$ is defined as an $\inf$ of sums of $p$, and $p$ is always positive, so $\Pb$ cannot be negative. By the second Kolmogorov axiom $\Pb(\Omega)=1$. Notice that $A$ is measurable and $A\subseteq\Omega$, so $1=\Pb(\Omega)=\Pb(\Omega\cap A)+\Pb(\Omega\setminus A)=\Pb(A)+\Pb(\Omega\setminus A)$, hence $1-\Pb(\Omega\setminus A)=\Pb(A)$. Since $\Pb$ is not negative, $\Pb(A)\leq 1$.

    \item[2$^{\mbox{nd}}$ axiom:] $\Pb(\Omega)=1$.

      Notice that $\Omega$ is the box including all the functions. Hence, there is no condition on the functions and so $n=0$. Then $\p(\Omega)=1$. By Lemma~\ref{lem:Peqp}, $\Pb(\Omega)=\p(\Omega)=1$.

    \item[3$^{\mbox{rd}}$ axiom:] Any countable sequence of pairwise disjoint (i.e.~incompatible) events $A_1,A_2\dots\in\A$, satisfies $\Pb(A_1\cup A_2\dots)=\sum_{i=1}^\infty \Pb(A_i)$.

      Let $\emptyset\neq I\subsetneq\mathbb{N}$. Since the sets $A_i$ are in $\A$, consider $n\in\mathbb{N}\setminus I$ and we have
      $$\Pb\left(\bigcup_{i\in\mathbb{N}\setminus I} A_i\right)=\Pb\left(\left(\bigcup_{i\in\mathbb{N}\setminus I} A_i\right)\cap A_n\right)+\Pb\left(\left(\bigcup_{i\in\mathbb{N}\setminus I} A_i\right)\cap\no{A_n}\right)$$
      Notice that $\left(\bigcup_{i\in\mathbb{N}\setminus I} A_i\right)\cap A_n=A_n$ and since the $A_i$'s are pairwise disjoint $\left(\bigcup_{i\in\mathbb{N}\setminus I} A_i\right)\cap\no{A_n}=\bigcup_{i\in\mathbb{N}\setminus(I\cup\{n\})} A_i$. Therefore, considering that this is valid for any $I$ and $n\notin I$, we have
      \begin{displaymath}
	\Pb\left(\bigcup_{i=1}^\infty A_i\right)=\Pb(A_1)+\Pb\left(\bigcup_{i=2}^\infty A_i\right)
	= \Pb(A_1)+\Pb(A_2)+\Pb\left(\bigcup_{i=3}^\infty A_i\right)
	=\dots = \sum_{i=1}^\infty \Pb(A_i). \qedhere 
      \end{displaymath}
  \end{description}
\end{proof}

\begin{example}\label{ex:long}
  Consider the non-strongly-normalising non-confluent rewrite system described in the introduction
  $\ve a_i\to\ve a_{i+1},\quad\ve a_i\to\ve a'_{i+1}$, where each reduction is equiprobable and each symbol is different from each other. It can be depicted as follows.
  $$\xymatrix@C=3ex@R=2ex{
    \ve a_0 \ar[r]\ar[dr] & \ve a_1 \ar[dr]\ar[r] & \ve a_2\ar[dr]\ar@{..>}[rr]& & \\
    & \ve a'_1 & \ve a'_2 & \ve a'_3 &
  }$$

  The probability that this rewrite system stops after exactly $n$ steps, starting from term $\ve a_0$ is $\Pb(B)$, with $B=\{f~|~f(\ve a_0)=\ve a_1,\dots f(\ve a_{n-2})=\ve a_{n-1}$ and $f(\ve a_{n-1})=\ve a'_n\})$, and since $B$ is a box, by Lemma~\ref{lem:Peqp} it is the same to $\Pb(B)=\p(B)=\dfrac{1}{\rho(\ve a_0)\dots\rho(\ve a_{n-1})}=\dfrac{1}{2^n}$.

  The probability of stopping at the step $n$ or before, starting at any point before $a_{n-1}$, is just the probability of the box $\{f~|~f(\ve a_{n-1})=\ve a'_n\}$, which is $\dfrac{1}{2}$.

  The probability of stopping at the step $n$ or $m$, starting at any point before $\ve a_{n-1}$ and $\ve a_{m-1}$ is the probability of the union of two boxes, however they are not independent events (its intersection is not empty). Hence let $B_1=\{f~|~\ve a_{n-1}=\ve a'_{n}\}$ and $B_2=\{f~|~\ve a_{m-1}=\ve a'_{m}\}$. The probability $\Pb(B_1\cup B_2)=\Pb((B_1\setminus B_2)\cup B_2)=\Pb(B_1\setminus B_2)\cup \Pb(B_2)=\Pb(\{f~|~\ve a_{n-1}=\ve a'_{n},\ve a_{m-1}=\ve a'_{m})+\Pb(B_2)=\dfrac{1}{4}+\dfrac{1}{2}=\dfrac{3}{4}$.

  Finally, the probability of not stopping at all, is the probability of the set $S=\{f~|~f(\ve a_i)=\ve a_{i+1}$ for $i\in\mathbb{N}\}$, which is not a box, since there is an infinite number of conditions. It is easy to check that we need an infinite number of boxes to cover such a set, however we can chose boxes as small as we want (that is, with a big number of conditions), which makes the infimum of their sums to be $0$, and so the probability of not stopping is, as expected, $0$.

  In other words, $\Pb(S)\leq\{f~|~f(\ve a_i)=\ve a_{i+1}, i\in[0,n]\}=\frac{1}{2^n}$, for any $n$. Hence when $n$ tends to $\infty$, $\Pb(S)$ tends to $0$.
\end{example}

\section{Transforming a non-deterministic into a probabilistic calculus}\label{sec:Lpplus}
\subsection{The calculus \texorpdfstring{\Lplus}{L+}}

In \cite{DiazcaroDowekLSFA12,DiazcaroDowek13} we have introduced a non-deterministic calculus called \Lplus, which is a simplification of an earlier probabilistic calculus by keeping non-determinism but removing explicit probabilities. Now we can transform this calculus into a probabilistic one.

The full calculus is depicted in Table~\ref{tab:Lplus}.
Typing judgements are of the form $\ve r:A$. A term $\ve r$ is typable if there exists a type $A$ such that $\ve r:A$. 
Following~\cite{GeuversKrebbersMcKinnaWiedijkLFMTP10,ParkSeoParkLeeJAR13}, 
we use a presentation of typed lambda-calculus without contexts and where
each variable occurrence is labelled by its type, such as.
$\lambda x^A.x^A$ or $\lambda x^A.y^B$. 
We sometimes omit the labels when they are clear from the context and 
write, for example, $\lambda x^A.x$ for $\lambda x^A.x^A$.
We use different letters for different variables and the type system forbids terms 
such as $\lambda x^A.x^B$ when $A$ and $B$ are different, by imposing preconditions to when the typing rules apply.
Let $S=\{x_1^{A_1},\dots,x_n^{A_n}\}$ be a set of declarations, we write 
$\coherentContext S$ when this set is functional, that is when $x_i = x_j$ implies 
$A_i = A_j$. 
For example $\coherentContext{\{x^A,y^{A\Rightarrow B}\}}$, but not 
$\coherentContext{\{x^A,x^{A\Rightarrow B}\}}$.
Typing rules have the following structure:
\[
  \condi{\cond{\mbox{\scriptsize Preconditions}}}{\infer[^{\mbox{\scriptsize (Rule name)}}]{\mbox{Derived judgement}}{\mbox{Hypotheses}}}
\]

The $\alpha$-conversion and the sets $FV(\ve r)$ of free variables of $\ve r$ and $FV(A)$ of free variables of $A$ are defined as usual in the $\lambda$-calculus (cf.~\cite[\S2.1]{Barendregt84}). For example $FV(x^Ay^B)=\{x^A,y^B\}$.
We say that a term $\ve r$ is closed whenever $FV(\ve r)=\emptyset$.
If $FV(\ve r)=\{x_1^{A_1},\dots,x_n^{A_n}\}$, we write $\Gamma(\ve r)=\{A_1,\dots,A_n\}$. $FV(\{A_1,\dots,A_n\})$ is defined by $\bigcup_{i=1}^nFV(A_i)$.
Given two terms $\ve r$ and $\ve s$ we denote by $\ve r[\ve s/x]$ the term obtained by simultaneously substituting the term $\ve s$ for all the free occurrences of $x$ in $\ve r$, subject to the usual proviso about renaming bound variables in $\ve r$ to avoid capture of the free variables of $\ve s$. Analogously $A[B/X]$ denotes the substitution of the type $B$ for all the free occurrences of $X$ in $A$, and $\ve r[B/X]$ the substitution in $\ve r$. For example, $(x^A)[B/Y]=x^{(A[B/Y])}$, $(\lambda x^A.\ve r)[B/X]=\lambda x^{A[B/X]}.\ve r[B/X]$ and $(\pi_A(\ve r))[B/X]=\pi_{A[B/X]}(\ve r[B/X])$. Simultaneous substitutions are defined in the same way. Finally, terms and types are considered up to $\alpha$-conversion.

Each term of the language has a main type associated, which can be obtained from the type annotations, and other types induced by the type equivalences.

The operational semantics of \Lplus\ is also given in Table~\ref{tab:Lplus}, where there are two distinct relations between terms: a symmetric relation $\eq$ and a reduction relation $\re$.
We write $\eq^*$ and $\re^*$ for the transitive and reflexive closures of $\eq$ and $\re$ respectively.
In particular, notice that $\eq^*$ is an equivalence relation. We just write $\to$ when we do not want to make the distinction between these relations.
We write $n.\ve r$ in \Lplus~as a shorthand for $\underbrace{\ve r+\cdots+\ve r}_{n\mbox{ times}}$.

This calculus has a non-deterministic projector. Indeed, the rule ``If $\ve r:A$, then $\pi_A(\ve r+\ve s)\re\ve r$'' is not-deterministic because the symbol $+$ is commutative, so if $\ve s:A$, this rule can produce either $\ve r$ or $\ve s$ non-deterministically. In any case, both reducts are valid proofs of $A$, and so the proof system is consistent. Refer to \cite{DiazcaroDowekLSFA12} for details.

{\begin{table}[!ht] 
  {\centering
    {\bf Grammar of types and terms}
    \begin{align*}
      A,B,C,\dots\ &::=\ X~|~A\Rightarrow B~|~A\wedge B~|~\forall X.A\enspace.\\
      \ve r,\ve s,\ve t\ &::=\ x^A~|~\lambda x^A.\ve r~|~\ve r\ve s~|~\ve r+\ve s~|~\pi_{A}(\ve r)~|~\Lambda X.\ve r~|~\ve r\{A\}\enspace.
    \end{align*}

    {\bf Equivalence between types}
    $$A\wedge B~\equiv~ B\wedge A\enspace,\quad\quad(A\wedge B)\wedge C~\equiv~ A\wedge(B\wedge C)\enspace,\quad\quad A\Rightarrow (B\wedge C)~\equiv~ (A\Rightarrow B)\wedge(A\Rightarrow C)\enspace.$$

    {\bf Rewriting system}

    \emph{Symmetric relation:}
    \begin{tabular}{c@{\hspace{1cm}}c@{\hspace{1cm}}c}
      $\ve r+\ve s\eq\ve s+\ve r$\enspace, & 
      $(\ve r+\ve s)\ve t\eq\ve r\ve t+\ve s\ve t$\enspace, &
      If $\ve r:A\Rightarrow (B\wedge C)$, then \\
      $(\ve r+\ve s)+\ve t\eq\ve r+(\ve s+\ve t)$\enspace, &
      $\lambda x^A.(\ve r+\ve s)\eq\lambda x^A.\ve r+\lambda x^A.\ve s$\enspace, &
      $\pi_{A\Rightarrow B}(\ve r)\ve s\eq\pi_B(\ve r\ve s)$\enspace.
    \end{tabular}
    \vspace{0.3cm}

    \emph{Reductions:}\\
    \hfill$(\lambda x^A.\ve r)~\ve s\re\ve r[\ve s/x]$\enspace,\hfill
    $(\Lambda X.\ve r)\{A\}\re\ve r[A/X]$\enspace,\hfill
    If $\ve r:A$, then $\pi_A(\ve r+\ve s)\re\ve r$\enspace.\hfill\enspace\\
    \medskip

    {\bf Typing system}
    \medskip

    \hfill
    \condi{\cond{A\equiv B}}{\infer[^{(\equiv)}]{\ve r:B}{\ve r:A}}
    \hfill
    \condi{}{\infer[^{(ax)}]{x^A:A}{}}
    \hfill
    \condi{\cond{\coherentContext{(FV(\ve r)\cup\{x^A\})}}}
    {\infer[^{(\Rightarrow_i)}]
    {\lambda x^A.\ve r:A\Rightarrow B}
    {\ve r:B}}
    \hfill 
    \condi{\cond{\coherentContext{FV(\ve r\ve s)}}}
    {
      \infer[^{(\Rightarrow_e)}]
      {\ve r\ve s:B}
      {\ve r:A\Rightarrow B & \ve s:A}
    }
    \hfill
    \medskip

    \hfill
    \condi{\cond{\coherentContext{FV(\ve r+\ve s)}}}
    {
      \infer[^{(\wedge_i)}]
      {\ve r+\ve s:A\wedge B}
      {\ve r:A & \ve s:B}
    }
    \hfill
    \condi{}{\infer[^{(\wedge_e)}]{\pi_A(\ve r):A}{\ve r:A\wedge B}}
    \hfill
    \condi{\cond{X\notin FV(\Gamma(\ve r))}}
    {\infer[^{(\forall_i)}]
    {\Lambda X.\ve r:\forall X.A}
    {\ve r: A}}
    \hfill
    \condi{}{\infer[^{(\forall_e)}]{\ve r\{B\}:A[B/X]}{\ve r:\forall X.A}}
    \hfill~

    \vspace{0.1cm}
  }
  \caption{The non-deterministic calculus \Lplus}
  \label{tab:Lplus}
\end{table}}

\subsection{From non-determinism to probabilities (or from \texorpdfstring{\Lplus}{L+}~to \texorpdfstring{\Lpplus}{L+p})}

Consider the following example (cf.\ \cite[Example 5]{DiazcaroDowek13}).
Two possible reduction paths can be fired from $(\Lambda X.(\pi_A(x^A + y^X )))\{A\}$:
Reducing first the projection,
$(\Lambda X.x^A)\{A\}\re x^A$, 
or
reducing first the beta
$\pi_A(x^A+y^A)\re x^A$.
The former path is deterministic and will always reduce to $x^A$, on the contrary, the latter can non-deterministically chose between $x^A$ and $y^A$. However, in both cases a proof of $A$ is obtained. 

Hence, the non-determinism is present not only due to the projector,
but also by a combination of not defining a reduction strategy and the
polymorphism, which can turn a deterministic projection into a
non-deterministic one.
We want to associate a probability to the second case, that is, to the non-deterministic projector (the $\pi$ reduction). 
With this aim, we consider the following ARS, called $\Lplus^\downarrow$.
The closed normal terms of \Lplus\ are objects of $\Lplus^\downarrow$. If $\ve r_1,\dots,\ve r_n$ are objects, 
then it is also an object.
The function $\to$ is given by the relations $\eq$ and $\re$.
In particular, if $\ve r:A$, then $\pi_A(\ve r+\ve r)\to\ve r$, with multiplicity $2$, i.e. $\to(\pi_A(\ve r+\ve r),\ve r)=2$.

\begin{theorem}\label{thm:toprob}
  Let $(\Omega,\A,\Pb)$ be a probability space over $\Lplus^\downarrow$.
  Let
  $B_{\ve r_i}=\{f~|~f(\pi_A(\suj{n}m_j.\ve r_j))=\ve r_i\}$ be a box.
  Then $\Pb(B_{\ve r_i})=\tfrac{m_i}{\suj{n}m_j}$.
\end{theorem}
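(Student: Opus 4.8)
The plan is to reduce the statement to a direct computation via Lemma~\ref{lem:Peqp}, since $B_{\ve r_i}$ is by construction a box (it imposes a single constraint, namely the value of a strategy at the one object $\pi_A(\suj{n}m_j.\ve r_j)$). Thus $\Pb(B_{\ve r_i}) = \p(B_{\ve r_i})$, and it remains only to evaluate $\p$ on this one-constraint box using Definition~\ref{def:prob}.

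First I would unfold the definition of $\p$ for a box with $n=1$ constraints: $\p(B_{\ve r_i}) = \frac{\to(\ve a, \ve r_i)}{\rho(\ve a)}$ where $\ve a = \pi_A(\suj{n}m_j.\ve r_j)$. So the claim amounts to showing $\to(\pi_A(\suj{n}m_j.\ve r_j), \ve r_i) = m_i$ and $\rho(\pi_A(\suj{n}m_j.\ve r_j)) = \suj{n}m_j$. The first identity follows from the way $\Lplus^\downarrow$ is set up: the reduction rule ``if $\ve r:A$ then $\pi_A(\ve r+\ve s)\re\ve r$'' combined with the stated multiplicity convention (if $\ve r:A$ then $\pi_A(\ve r+\ve r)\to\ve r$ with multiplicity $2$, and more generally each of the $m_i$ syntactic copies of $\ve r_i$ inside the sum $\suj{n}m_j.\ve r_j = \underbrace{\ve r_1+\cdots+\ve r_1}_{m_1}+\cdots+\underbrace{\ve r_n+\cdots+\ve r_n}_{m_n}$ contributes one way of firing the projector onto $\ve r_i$) gives exactly $m_i$ edges from $\pi_A(\suj{n}m_j.\ve r_j)$ to $\ve r_i$. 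Then by Definition~\ref{def:degree}, $\rho(\pi_A(\suj{n}m_j.\ve r_j)) = \sum_{\ve b}\to(\pi_A(\suj{n}m_j.\ve r_j),\ve b) = \suj{n}m_j$, since the only objects reachable in one step are the $\ve r_j$ (each $\ve r_j$ is a closed normal term, so no $\eq$ or other $\re$ step applies, and $\pi_A$ requires the argument to be a sum which it is). Combining, $\p(B_{\ve r_i}) = \frac{m_i}{\suj{n}m_j}$.

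The main obstacle — really the only non-bookkeeping point — is justifying that $\to(\pi_A(\suj{n}m_j.\ve r_j),\ve r_i) = m_i$ precisely, i.e. that there are no \emph{other} contributions to this count and that the multiplicity convention aggregates as claimed. Here one must be careful that $\eq$-steps (associativity, commutativity of $+$) do not themselves count as reductions producing $\ve r_i$ from $\pi_A(\suj{n}m_j.\ve r_j)$ — they produce other $\pi_A(\cdots)$ terms, not $\ve r_i$ — and that since the $\ve r_j$ are closed normal terms of \Lplus, the subterm $\suj{n}m_j.\ve r_j$ admits no internal reduction, so the projector rule is the unique applicable rule and it fires onto each of the $m_i$ occurrences of $\ve r_i$. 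Once this is granted, the rest is the one-line application of Lemma~\ref{lem:Peqp} and Definitions~\ref{def:prob} and~\ref{def:degree} sketched above, so I would keep the written proof short: invoke that $B_{\ve r_i}$ is a box, apply Lemma~\ref{lem:Peqp}, and compute.
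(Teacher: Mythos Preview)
Your proposal is correct and follows exactly the paper's approach: invoke Lemma~\ref{lem:Peqp} to pass from $\Pb$ to $\p$ on the single-constraint box, then compute $\to(\pi_A(\suj{n}m_j.\ve r_j),\ve r_i)=m_i$ and $\rho(\pi_A(\suj{n}m_j.\ve r_j))=\suj{n}m_j$ directly from the multiplicity convention and Definition~\ref{def:degree}. Your discussion of why $\eq$-steps and internal reductions do not contribute is more explicit than the paper's own proof, which simply asserts the two counts and concludes in one line.
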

\vspace{-9mm}
{\begin{proof}
  Notice that
    $
    \rho(\pi_A(\sui{n}m_i.\ve r_i)) 
    = \sum_{\ve r}\to(~\pi_A(\sui{n}m_i.\ve r_i),\ve r)
    = \sharp[\overbrace{\ve r_1,\cdots,\ve r_1}^{m_1\mbox{ times}},\dots,\overbrace{\ve r_n,\cdots,\ve r_n}^{m_n\mbox{ times}}]
    = \suj{n}m_j
    $
  And $\to(\pi_A(\sui{n}m_i.\ve r_i),\ve r_i)=m_i$.  Hence, $\Pb(B_{\ve r_i})=\p(B_{\ve r_i})=\frac{m_i}{\suj{n}m_j}$.
\end{proof}}

\begin{definition}[The probabilistic calculus \Lpplus]
  Let \Lpplus\ be the language of Table~\ref{tab:Lplus}, with the following modification:\\
  Replace rule ``If $\ve r:A$, then $\pi_A(\ve r+\ve s)\re\ve r$'' by\\
  ``For $i=1,\dots,n$, let $\ve r_i:A$ and $\ve s\notder A$, be closed normal terms. Then
  \begin{displaymath}
    \pi_A(\sui{n}m_i.\ve r_i+\ve s)\re\ve r_i\qquad\mbox{with probability }\frac{m_i}{\suj{n}m_j}\mbox{''}\enspace.   
  \end{displaymath}
\end{definition}

\begin{remark}
  Notice that by Theorem~\ref{thm:toprob} the probabilistic reduction is well defined.
\end{remark}

\subsection{The calculus \texorpdfstring{\Alg}{Alg}}

The calculus \Alg\ is inspired from \cite{ArrighiDowekRTA08,VauxMSCS09}. We restrict the algebraic calculus to only have probabilistic superpositions, and we type it with a simple extension of System $F$ (cf.~\cite[Def.~5.1]{ArrighiDiazcaroLMCS12}). The grammar of terms ensures that the linear combinations of terms are probability distributions, however the type system allows typing pseudo-terms, that is, terms that are not probability distributions. A term in this language, is a term produced by the grammar of terms, and typed. The full calculus is depicted in Table~\ref{tab:Alg}.

\begin{table}[ht]
  {\centering
    {\bf Grammar of types}
    $$A,B,C,\dots\ ::=\quad X~|~A\Rightarrow B~|~\forall X.A\enspace.$$
    {\bf Grammar of pseudo-terms}
    $$\ve r,\ve s,\ve t\ ::=
    \quad x^A~|~
    \lambda x^A.\ve r~|~
    \ve{rs}~|~\Lambda X.\ve r~|~
    \ve r\{A\}~|~
    p.\ve r~|~
    \ve r+\ve s$$
    {\bf Grammar of terms}
    $$\ve r,\ve s,\ve t\ ::=
    \quad x^A~|~
    \lambda x^A.\ve r~|~
    \ve{rs}~|~\Lambda X.\ve r~|~
    \ve r\{A\}~|~
    \sui{n} p_i.\ve r_i
    \qquad\mbox{with }
    \left\{
      \begin{array}{l}
	n>0,\\
      p_i\in\mathbb{Q}(0,1]\mbox{ and}\\
      \sui{n}p_i=1
    \end{array}
    \right.$$

    {\bf Rewriting system}

    \emph{Symmetric relation:}
    \begin{tabular}{c@{\qquad\qquad}c@{\qquad\qquad}c}
      $\ve r+\ve s\eq\ve s+\ve r$\enspace, & 
      $(\ve r+\ve s)\ve t\eq\ve r\ve t+\ve s\ve t$\enspace, &
      $1.\ve r\eq\ve r$\enspace.\\
      $(\ve r+\ve s)+\ve t\eq\ve r+(\ve s+\ve t)$\enspace, &
      $\lambda x^A.(\ve r+\ve s)\eq\lambda x^A.\ve r+\lambda x^A.\ve s$\enspace,&
    \end{tabular}
    \medskip

    \emph{Reductions:}
    \begin{tabular}{c@{\qquad\qquad}c@{\qquad\qquad}c}
      Beta & Elementary & Factorisation\\
      $(\lambda x^A.\ve r)~\ve s\re\ve r[\ve s/x]$\enspace, &
      $p.q.\ve r\re pq.\ve r$\enspace, &
      $p.\ve r+q.\ve r\re(p+q).\ve r$\enspace.\\
      $(\Lambda X.\ve r)\{A\}\re\ve r[A/X]$\enspace, &
      $p.(\ve r+\ve s)\re p.\ve r+p.\ve s$\enspace, &
    \end{tabular}
    \medskip

    {\bf Typing system}\\
    \medskip

    \hfill
    \condi{}{\infer[^{(ax)}]{x^A:A}{}}
    \hfill
    \condi{\cond{\coherentContext{(FV(\ve r)\cup\{x^A\})}}}
    {\infer[^{(\Rightarrow_i)}]
    {\lambda x^A.\ve r:A\Rightarrow B}
    {\ve r:B}}
    \hfill 
    \condi{\cond{\coherentContext{FV(\ve r\ve s)}}}
    {
      \infer[^{(\Rightarrow_e)}]
      {\ve r\ve s:B}
      {\ve r:A\Rightarrow B & \ve s:A}
    }
    \hfill
    \medskip

    \hfill
    \condi{\cond{\coherentContext{FV(\ve r+\ve s)}}}
    {
      \infer[^{(+_i)}]
      {\ve r+\ve s:A}
      {\ve r:A & \ve s:A}
    }
    \hfill
    \condi{}{\infer[^{(p_i)}]{p.\ve r:A}{\ve r:A}}
    \hfill
    \condi{\cond{X\notin FV(\Gamma(\ve r))}}
    {\infer[^{(\forall_i)}]
    {\Lambda X.\ve r:\forall X.A}
    {\ve r: A}}
    \hfill
    \condi{}{\infer[^{(\forall_e)}]{\ve r\{B\}:A[B/X]}{\ve r:\forall X.A}}
    \hfill~

    \vspace{0.1cm}
  }
  \caption{The algebraic calculus \Alg.}
  \label{tab:Alg}
\end{table}

\subsection{From \texorpdfstring{\Alg}{Alg}~to~\texorpdfstring{\Lpplus}{L+p}}
We give a translation from the probabilistic calculus $\Alg$, including scalars, to the probabilistic calculus $\Lpplus$.
\[
  \begin{array}{r@{~=~}l@{\qquad}r@{~=~}l@{\qquad}r@{~=~}l}
    \AtP{x^A} & x^A
    &
    \AtP{\ve r\ve s} & \AtP{\ve r}\AtP{\ve s}
    &
    \AtP{\ve r\{A\}} & \AtP{\ve r}\{A\}
    \\
    \AtP{\lambda x^A.\ve r} & \lambda x^A.\AtP{\ve r}
    &
    \AtP{\Lambda X.\ve r} & \Lambda X.\AtP{\ve r}
    &
    \AtP{\sui{n}\dfrac{n_i}{d_i}.\ve r_i} & \pi_A(\sui{n}m_i.\AtP{\ve r_i})\\[-2mm]
    \multicolumn{4}{c}{}
    &
    \multicolumn{2}{r}{
      \text{where }\ve r_i:A, d_i\in\mathbb{N}^*, m_i=n_i(\prod\limits_{\arriba{k=1}{k\neq i}}^{n}d_k),\text{ for }i=1,\dots,n.
    }
  \end{array}
\]

\begin{example}
  Let $\ve r:A$, $\ve t:A$ and $\ve s:A$.
  $\AtP{
    \dfrac{3}{4}.\ve r
    +\dfrac{1}{8}.\ve t
    +\dfrac{1}{8}.\ve s
  }
  =
  \pi_A
  \left(
  192.
  \AtP{\ve r}
  +
  32.
  \AtP{\ve t}
  +
  32.
  \AtP{\ve s}
  \right)$.
  By Theorem~\ref{thm:toprob}, this last term reduces to $\AtP{\ve r}$ with probability $\frac{192}{192+32+32}=\frac{3}{4}$, to 
  $\AtP{\ve t}$ with probability $\frac{32}{192+32+32}=\frac{1}{8}$, and to
  $\AtP{\ve s}$ with probability $\frac{32}{192+32+32}=\frac{1}{8}$.
\end{example}

\begin{lemma}\label{lem:AtPSubstitution}~
  \begin{multicols}{2}
  \begin{enumerate}
    \item\label{it:AtPSubsType}
      $\AtP{\ve r}[A/X]
      =
      \AtP{\ve r[A/X]}$.
    \item\label{it:AtPSubsTerm}
      $\AtP{\ve r}[\AtP{\ve s}/x]
      =
      \AtP{\ve r[\ve s/x]}$.
  \end{enumerate}
  \end{multicols}
\end{lemma}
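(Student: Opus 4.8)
\textbf{Overall approach.} Both parts are proved by structural induction on the term $\ve r$ of $\Alg$, using the recursive definition of the translation $\AtP{\cdot}$. For part~\ref{it:AtPSubsType} I proceed by induction on the structure of $\ve r$; for part~\ref{it:AtPSubsTerm} likewise, with the base case $\ve r = y^B$ splitting on whether $y = x$. The only genuinely interesting case in each part is the linear combination $\ve r = \sui{n}\frac{n_i}{d_i}.\ve r_i$, since every other constructor ($x^A$, $\lambda x^A.\ve r$, application, $\Lambda X.\ve r$, type application) is translated homomorphically and the statement follows immediately from the induction hypothesis together with the corresponding clause of the ordinary substitution definition recalled in the preliminaries (e.g.\ $(\lambda x^A.\ve r)[B/X] = \lambda x^{A[B/X]}.\ve r[B/X]$).

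\textbf{Part~\ref{it:AtPSubsType}.} In the linear-combination case, write $\ve r = \sui{n}\frac{n_i}{d_i}.\ve r_i$ with $\ve r_i : C$, so that $\AtP{\ve r} = \pi_C(\sui{n}m_i.\AtP{\ve r_i})$ with $m_i = n_i\prod_{k\neq i}d_k$. Applying $[A/X]$ gives $\pi_{C[A/X]}(\sui{n}m_i.\AtP{\ve r_i}[A/X])$, which by the induction hypothesis equals $\pi_{C[A/X]}(\sui{n}m_i.\AtP{\ve r_i[A/X]})$. On the other side, $\ve r[A/X] = \sui{n}\frac{n_i}{d_i}.\ve r_i[A/X]$ and each $\ve r_i[A/X] : C[A/X]$; since the integers $n_i, d_i$ are unchanged by the type substitution, the coefficients $m_i$ computed for $\AtP{\ve r[A/X]}$ are literally the same, so $\AtP{\ve r[A/X]} = \pi_{C[A/X]}(\sui{n}m_i.\AtP{\ve r_i[A/X]})$, matching. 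The point to check carefully here is exactly that the weights $m_i$ depend only on the rational scalars and not on the types, so the substitution commutes cleanly.

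\textbf{Part~\ref{it:AtPSubsTerm} and the main obstacle.} The base case $\ve r = x^A$ gives $\AtP{x^A}[\AtP{\ve s}/x] = x^A[\AtP{\ve s}/x] = \AtP{\ve s} = \AtP{x^A[\ve s/x]}$; for $\ve r = y^B$ with $y \neq x$ both sides are $y^B$. The binder cases $\lambda y^B.\ve r$ and $\Lambda X.\ve r$ are where I expect the real work: one must invoke the usual capture-avoidance proviso, i.e.\ $\alpha$-rename so that $y \notin FV(\ve s)$ (resp.\ $X \notin FV(\AtP{\ve s})$ — note $FV(\AtP{\ve s}) = FV(\ve s)$ is itself a small sublemma that follows by an easy induction on the translation clauses), after which the translation, the substitution, and the induction hypothesis commute as for the non-binder cases. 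The linear-combination case is handled exactly as in part~\ref{it:AtPSubsType}: $\AtP{\ve r}[\AtP{\ve s}/x] = \pi_C(\sui{n}m_i.(\AtP{\ve r_i}[\AtP{\ve s}/x])) = \pi_C(\sui{n}m_i.\AtP{\ve r_i[\ve s/x]}) = \AtP{\ve r[\ve s/x]}$, again using that the $m_i$ are untouched and that substitution distributes over $+$ and over the scalars. The main obstacle is thus purely bookkeeping around bound variables and the observation that $\AtP{\cdot}$ preserves free variables and leaves scalar weights intact; no deep argument is needed.
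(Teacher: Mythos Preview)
Your proposal is correct and follows essentially the same route as the paper: structural induction on $\ve r$, with all non-sum constructors handled homomorphically and the linear-combination case closed by the induction hypothesis plus the observation that the integer weights $m_i$ are unaffected by substitution. If anything, your treatment is slightly cleaner than the paper's---you use a fresh name $C$ for the common type of the $\ve r_i$ (avoiding the paper's collision with the substituted type $A$), and you correctly track that the projector subscript becomes $C[A/X]$ in part~\ref{it:AtPSubsType}, which the paper glosses over.
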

\begin{proof}~
  \begin{enumerate}
    \item We proceed by induction on $\ve r$.
      \begin{itemize}
	\item Let $\ve r=x^B$.
	  $\AtP{x^B}[A/X]=
	  x^B[A/X]=
	  x^{B[A/X]}=
	  \AtP{x^{B[A/X]}}=
	  \AtP{x^B[A/X]}$.

	\item Let $\ve r=\lambda x^B.\ve t$.
	  $\AtP{\lambda x^B.\ve t}[A/X]=
	  \lambda x^B.\AtP{\ve t}[A/X]=
	  \lambda x^{B[A/X]}.\AtP{\ve t}[A/X]\overset{IH}{=}
	  \lambda x^{B[A/X]}.\AtP{\ve t[A/X]}=$\\
	  $\AtP{\lambda x^{B[A/X]}.\ve t[A/X]}=
	  \AtP{(\lambda x^B.\ve t)[A/X]}$.

	\item Let $\ve r=\ve t_1\ve t_2$.
	  $\AtP{\ve t_1\ve t_2}[A/X]=
	  \AtP{\ve t_1}[A/X] \AtP{\ve t_2}[A/X] \overset{IH}{=}
	  \AtP{\ve t_1[A/X]} \AtP{\ve t_2[A/X]} =
	  \AtP{\ve t_1[A/X]\ve t_2[A/X]} =
	  \AtP{(\ve t_1\ve t_2)[A/X]}$.

	\item Let $\ve r=\Lambda Y.\ve t$, with $Y\notin FV(A)$. 
	  $\AtP{\Lambda Y.\ve t}[A/X]=
	  \Lambda Y.\AtP{\ve t}[A/X]\overset{IH}{=}
	  \Lambda Y.\AtP{\ve t[A/X]}=
	  \AtP{\Lambda Y.\ve t[A/X]}=
	  \AtP{(\Lambda Y.\ve t)[A/X]}$.

	\item Let $\ve r=\ve t\{B\}$. 
	  $\AtP{\ve t\{B\}}[A/X]=
	  \AtP{\ve t}\{B\}[A/X]=
	  \AtP{\ve t}[A/X]\{B[A/X]\}\overset{IH}{=}
	  \AtP{\ve t[A/X]}\{B[A/X]\}=$\\
	  $\AtP{\ve t[A/X]\{B[A/X]\}}=
	  \AtP{(\ve t\{B\})[A/X]}$.

	\item Let $\ve r=\sui{n}\frac{n_i}{d_i}.\ve r_i$.
	  $\AtP{\sui{n}\frac{n_i}{d_i}.\ve r_i}[A/X]=
	  \pi_A\left( \sui{n}m_i.\AtP{\ve r_i} \right)[A/X]=
	  \pi_A\left( \sui{n}m_i.\AtP{\ve r_i}[A/X]\right)\overset{IH}{=}$\\
	  $\pi_A\left(\sui{n}m_i\AtP{\ve r_i[A/X]}\right)=
	  \AtP{\sui{n}\frac{n_i}{d_i}.\ve r_i[A/X]}=
	  \AtP{(\sui{n}\frac{n_i}{d_i}.\ve r_i)[A/X]}$.
      \end{itemize}
    \item We proceed by induction on $\ve r$.
      \begin{itemize}
	\item Let $\ve r=x^A$.
	  $\AtP{x^A}[\AtP{\ve s}/x]=
	  x^A[\AtP{\ve s}/x]=
	  \AtP{\ve s}=
	  \AtP{x^A[\ve s/x]}$.

	\item Let $\ve r=y^A$,
	  $\AtP{y^A}[\AtP{\ve s}/x]=
	  y^A[\AtP{\ve s}/x]=
	  y^A=
	  \AtP{y^A}=
	  \AtP{y^A[\ve s/x]}$.

	\item Let $\ve r=\lambda y^B.\ve t$.
	  $\AtP{\lambda y^B.\ve t}[\AtP{\ve s}/x]=
	  \lambda y^B.\AtP{\ve t}[\AtP{\ve s}/x]\overset{IH}{=}$

	  \hfill$
	  \lambda y^B.\AtP{\ve t[\ve s/x]}=
	  \AtP{\lambda y^B.\ve t[\ve s/x]}=
	  \AtP{(\lambda y^B.\ve t)[\ve s/x]}$.

	\item Let $\ve r=\ve t_1\ve t_2$.
	  $\AtP{\ve t_1\ve t_2}[\AtP{\ve s}/x]=
	  \AtP{\ve t_1}[\AtP{\ve s}/x] \AtP{\ve t_2}[\AtP{\ve s}/x] \overset{IH}{=}
	  \AtP{\ve t_1[\ve s/x]} \AtP{\ve t_2[\ve s/x]} =
	  \AtP{\ve t_1[\ve s/x]\ve t_2[\ve s/x]} =$\\
	  $\AtP{(\ve t_1\ve t_2)[\ve s/x]}$.

	\item Let $\ve r=\Lambda X.\ve t$, 
	  $\AtP{\Lambda X.\ve t}[\AtP{\ve s}/x]=
	  \Lambda X.\AtP{\ve t}[\AtP{\ve s}/x]\overset{IH}{=}
	  \Lambda X.\AtP{\ve t[\ve s/x]}=
	  \AtP{\Lambda X.\ve t[\ve s/x]}=
	  \AtP{(\Lambda X.\ve t)[\ve s/x]}$.

	\item Let $\ve r=\ve t\{B\}$. Let $FV(\ve s)=\vec{X}$ and $\vec{Y}$ be a set of free variables such that $\ve s[\vec{Y}/\vec{X}][\vec{X}/\vec{Y}]=\ve s$. Then,
	  $\AtP{\ve t\{B\}}[\AtP{\ve s}/x]=
	  \AtP{\ve t}\{B\}[\AtP{\ve s}/x]=
	  \AtP{\ve t}[\AtP{\ve s[\vec{Y}/\vec{X}]/x}]\{B\}[\vec{X}/\vec{Y}]\overset{IH}{=}
	  \AtP{\ve t[\ve s[\vec{Y}/\vec{X}]/x]}\{B\}[\vec{X}/\vec{Y}]$

	  \hfill$
	  =
	  \AtP{\ve t[\ve s[\vec{Y}/\vec{X}]/x]\{B\}}[\vec{X}/\vec{Y}]\overset{item~\ref{it:AtPSubsType}}{=}
	  \AtP{\ve t[\ve s[\vec{Y}/\vec{X}]/x]\{B\}[\vec{X}/\vec{Y}]}=
	  \AtP{(\ve t\{B\})[\ve s/x]}$.

	\item Let $\ve r=\sui{n}\frac{n_i}{d_i}.\ve r_i$.
	  $\AtP{\sui{n}\frac{n_i}{d_i}.\ve r_i}[A/X]=
	  \pi_A\left( \sui{n}m_i.\AtP{\ve r_i} \right)[A/X]=
	  \pi_A\left( \sui{n}m_i.\AtP{\ve r_i}[A/X]\right)\overset{IH}{=}$\\
	  $\pi_A\left(\sui{n}m_i\AtP{\ve r_i[A/X]}\right)=
	  \AtP{\sui{n}\frac{n_i}{d_i}.\ve r_i[A/X]}=
	  \AtP{(\sui{n}\frac{n_i}{d_i}.\ve r_i)[A/X]}$. \qedhere
      \end{itemize}
  \end{enumerate}
\end{proof}

\begin{theorem}\label{thm:AlgtoLp}
  If $\ve r\to^*\sui{n}p_i.\ve t_i$, with $\ve t_i$ in \Alg, with $\sui{n}p_i=1$
  and $\AtP{\ve t_i}\to^*\ve s_i$, then
  $\AtP{\ve r}\to^*\ve s_i$ with probability 
  $p_i\left(\suj{n}p_j\right)^{-1}$ in \Lpplus.
\end{theorem}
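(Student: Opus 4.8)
The plan is to reduce Theorem~\ref{thm:AlgtoLp} to a statement about a single reduction step of \Alg\ and then iterate. First I would establish a \emph{simulation lemma}: if $\ve u\to\ve v$ in \Alg\ by one step of the rewriting system, then $\AtP{\ve u}\to^*\AtP{\ve v}$ in \Lpplus, \emph{with probability $1$ for all rules except the factorisation/elementary rules that regroup scalars}, and in the case where the translated term is of the form $\pi_A(\sui n m_i.\AtP{\ve r_i})$ the probabilistic $\pi$-reduction of \Lpplus\ picks out the branch $\AtP{\ve r_i}$ with exactly the probability $p_i(\suj n p_j)^{-1}$ prescribed by Theorem~\ref{thm:toprob}. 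The $\beta$, $\forall$-elimination and symmetric-relation cases follow by a routine induction on the term structure, using Lemma~\ref{lem:AtPSubstitution}(\ref{it:AtPSubsTerm}) and (\ref{it:AtPSubsType}) to push the translation through substitutions; the elementary and factorisation rules of \Alg\ translate to identities on coefficients $m_i$ (the $\pi_A(\cdots)$ term is unchanged up to the arithmetic of the multiplicities), so they cost no probability.

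Next I would handle the genuinely probabilistic content. The key observation is that a reduction $\ve r\to^*\sui n p_i.\ve t_i$ in \Alg, read through $\AtP{\cdot}$, sends $\ve r$ to a \Lpplus-term in which the only remaining nondeterminism is the choice of one summand inside a $\pi_A(\sui n m_i.\AtP{\ve t_i})$, where $m_i$ is the integer obtained by clearing denominators from $p_i$. By Theorem~\ref{thm:toprob}, $\Pb(B_{\AtP{\ve t_i}})=m_i/\sum_j m_j = p_i/\sum_j p_j$, and since $\sum_j p_j=1$ this equals $p_i\bigl(\suj n p_j\bigr)^{-1}$. Then, given $\AtP{\ve t_i}\to^*\ve s_i$ (a deterministic-in-probability reduction in \Lpplus, or at worst another probabilistic one whose chosen branch is $\ve s_i$), composing the two reduction sequences and multiplying the probabilities — probability $1$ to reach $\pi_A(\sui n m_i.\AtP{\ve t_i})$, probability $p_i(\suj n p_j)^{-1}$ to select branch $i$, probability along $\AtP{\ve t_i}\to^*\ve s_i$ which I absorb into the statement — yields $\AtP{\ve r}\to^*\ve s_i$ with the claimed probability.

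The induction on the length of $\ve r\to^*\sui n p_i.\ve t_i$ then goes through: the base case is the simulation lemma applied zero or one times, and the inductive step uses that composition of reductions corresponds to multiplication of probabilities, together with the fact that intermediate \Alg-terms are themselves linear combinations summing to $1$ (so the normalising factor stays trivial at each stage). I would be careful that confluence is \emph{not} needed here — we are only claiming the existence of one reduction path realising each outcome with the stated probability, consistent with the whole point of the paper.

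\textbf{The main obstacle} I anticipate is bookkeeping the coefficients: the translation of $\sui n \frac{n_i}{d_i}.\ve r_i$ introduces $m_i = n_i\prod_{k\neq i} d_k$, so after an \Alg-reduction that itself regroups summands (factorisation $p.\ve r + q.\ve r \re (p+q).\ve r$, or elementary $p.q.\ve r\re pq.\ve r$) one must check that the \emph{new} multiplicities computed from the \emph{reduced} coefficients give the same branch probabilities — i.e.\ that $m_i/\sum_j m_j$ is invariant under common rescaling and under the specific merges produced by those rules. This is elementary fraction arithmetic but it is exactly the place where an off-by-a-factor error would hide, so I would isolate it as a small arithmetic sublemma (``$\p(B_{\ve r_i})$ depends only on the rational $p_i$, not on the chosen representation $n_i/d_i$'') and invoke it explicitly in the simulation lemma rather than inlining the computation.
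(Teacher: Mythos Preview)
Your plan is correct and matches the paper's argument in substance. The paper proceeds by ``case analysis on the last reduction step to reach $\sui{n}p_i.\ve t_i$'', treating each \Alg\ rule in turn (the two $\beta$-rules via Lemma~\ref{lem:AtPSubstitution}, the symmetric rules, and the scalar rules $p.q.\ve r\re pq.\ve r$, $p.(\ve r_1+\ve r_2)\re p.\ve r_1+p.\ve r_2$, $p.\ve r+q.\ve r\re(p+q).\ve r$), and in every case invoking Theorem~\ref{thm:toprob} to extract the branch probability $m_i/\sum_j m_j=p_i(\sum_j p_j)^{-1}$ from the resulting $\pi_A(\sui{n}m_i.\AtP{\ve t_i})$. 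Your reorganisation of the same content into an explicit one-step simulation lemma followed by induction on the length of $\ve r\to^*\sui{n}p_i.\ve t_i$ is a mild tidying of that case analysis rather than a different route; the arithmetic sublemma you propose (invariance of $m_i/\sum_j m_j$ under the coefficient manipulations induced by the elementary and factorisation rules) is exactly what the paper verifies by hand in each of those cases.
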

\begin{proof}
  Let $\ve r:A$ in \Alg.
  For $i=1,\dots,n$, assume $p_i=\dfrac{n_i}{d_i}$ with $n_i,d_i\in\mathbb{N}^*$.
  We proceed by a case analysis on the last reduction step to reach
  $\sui{n}p_i.\ve t_i$.
  \begin{itemize}
    \item If $\ve r=\sui{n}p_i.\ve t_i$, then
      $\pi_A(\sui{n}(\prod_{\arriba{k=1}{k\neq i}}^n d_kn_i).\AtP{\ve t_i})\to^*
      \pi_A(\sui{n}(\prod_{\arriba{k=1}{k\neq i}}^n d_kn_i).\ve s_i')$
      By Theorem~\ref{thm:toprob}, this term reduces in one step to $\ve s_i'$ with probability
      $$\dfrac{\prod_{\arriba{k=1}{k\neq i}}^nd_kn_i}{\sui{n}\left(\prod_{\arriba{k=1}{k\neq i}}^nd_kn_i\right)}
      =
      \left(\dfrac{\dfrac{n_i}{d_i}}{\sui{n}\dfrac{n_i}{d_i}}\right)
      .
      \left(\dfrac{\prod_{k=1}^nd_k}{\prod_{k=1}^nd_k}\right)
      =
      p_i\left(\suj{n}p_j\right)^{-1}\enspace.
      $$

    \item Consider
      $1.\ve r\eq\ve r$, with $\ve r=\sui{n}p_i.\ve t_i$. We have,
      $\AtP{1.\ve r}=\pi_A(1.\AtP{\ve r})\to^*\pi_A(1.\ve s)$, which reduces with probability one to $\ve s$. Notice that $\ve s$ is a reduct of $\AtP{\sui{n}p_i.\ve t_i}=\pi_A(\sui{n}m_i.\AtP{\ve t_1})$. We conclude with Theorem~\ref{thm:toprob}.

    \item Consider
      $\left(\sum_{i=m+1}^n p_i.\ve t_i\right)
      +\left(\sui{m}p_i.\ve t_i\right)
      \eq
      \sui{n}p_i.\ve t_i$,
      with $1\leq m<n$.
      Since $\ve r:A$, then each $\ve t_i:A$. We have,
      $$\AtP{\sum_{i=m+1}^n p_i.\ve t_i+\sui{m}p_i.\ve t_i}=
      \pi_A\left(
      \sum_{i=m+1}^n
      m_i.
      \AtP{\ve t_i}
      +
      \sui{m}
      m_i.
      \AtP{\ve t_i}
      \right)
      \eq
      \pi_A\left(
      \sui{n}
      m_i.
      \AtP{\ve t_i}
      \right)\enspace.$$
      where $m_i=\prod_{\arriba{k=1}{k\neq i}}^n d_kn_i$.
      We conclude with Theorem~\ref{thm:toprob}.

    \item Consider
      $\lambda x^A.(\ve r+\ve s)\eq\lambda x^A.\ve r+\lambda x^A.\ve s$.
      We have
      $\AtP{\lambda x^A.(\ve r+\ve s)}
      =
      \lambda x^A.(\AtP{\ve r+\ve s})
      =
      \lambda x^A.\pi_A(\AtP{\ve r}+\AtP{\ve s})
      \to^*
      \lambda x^A.\pi_A(\ve r'+\ve s')$
      By Theorem~\ref{thm:toprob}, 
      $\lambda x^A.\pi_A(\ve r'+\ve s')$
      reduces to
      $\lambda x^A.\ve r'$ (which is a reduct of $\AtP{\lambda x^A.\ve r}
      =\lambda x^A.\AtP{\ve r}$),
      with probability $\frac{1}{2}$,
      and to
      $\lambda x^A.\ve s'$ (which is a reduct of $\AtP{\lambda x^A.\ve s}
      =\lambda x^A.\AtP{\ve s}$),
      with probability $\frac{1}{2}$.

    \item Consider $(\lambda x^A.\ve r)~\ve s\re\ve r[\ve s/x]$, with
      $\ve r[\ve s/x]=\sui{n}p_i.\ve t_i$. Then
      $\AtP{(\lambda x^A.\ve r)~\ve s}
      =
      (\lambda x^A.\AtP{\ve r})~\AtP{\ve s}
      \re
      \AtP{\ve r}[\AtP{\ve s}/x]$ which, by Lemma~\ref{lem:AtPSubstitution}(\ref{it:AtPSubsTerm}),
      is equal to
      $\AtP{\ve r[\ve s/x]}=
      \AtP{\sui{n}p_i.\ve t_i}$ and this, by definition is equal to $\pi_A\left( \sui{n}(\prod_{\arriba{k=1}{k\neq i}}^n d_kn_i).\AtP{\ve t_i} \right)$. We conclude with Theorem~\ref{thm:toprob}.

    \item Consider $(\Lambda X.\ve r)\{A\}\re\ve r[A/X]$, with
      $\ve r[A/X]=\sui{n}p_i.\ve t_i$.
      Then,
      $\AtP{\Lambda X.\ve r\{A\}}=\Lambda X.\AtP{\ve r}\{A\}\re\AtP{\ve r}[A/X]$, which by Lemma~\ref{lem:AtPSubstitution}(\ref{it:AtPSubsType}) is equal to 
      $\AtP{\ve r[A/X]}=
      \AtP{\sui{n}p_i.\ve t_i}=
      \pi_A\left(\! \sui{n}(\prod_{\arriba{k=1}{k\neq i}}^n d_kn_i).\AtP{\ve t_i}\! \right)$. 
      We conclude with Theorem~\ref{thm:toprob}.

    \item Consider $p.q.\ve r\re pq.\ve r$. Let $p.q.\ve r:A$. Since $pq.\ve r=\sui{n}p_i.\ve t_i$ with $\sui{n}p_i=1$, we have $n=1$ and $pq=p_1=1$. Also, since $p.q.\ve r$ is a term, $p=q=1$. So, we have
      $\AtP{1.1.\ve r}=
      \pi_A(1.\AtP{1.\ve r})=
      \pi_A(1.\pi_A(1.\AtP{\ve r}))\to^*
      \pi_A(1.\pi_A(1.\ve s))$
      Notice that this term reduces with probability $1$ to
      ${\pi_A(1.\ve s)}$, which is a reduct of
      ${\pi_A(1.\AtP{\ve r})}=\AtP{1.\ve r}$.

    \item Consider $p.(\ve r_1+\ve r_2)\re p.\ve r_1+p.\ve r_2$. Since $p.\ve r_1+p.\ve r_2=\sui{n}p_i.\ve t_i$, with $\sui{n}p_i=1$, we have $n=2$ and $p=\frac{1}{2}$, however in such case $\frac{1}{2}.(\ve r_1+\ve r_2)$ is a pseudo-term, not a term.

    \item Consider $p.\ve r+q.\ve r\re (p+q).\ve r$. Since $(p+q).\ve r=\sui{n}p_i.\ve t_i$, with $\sui{n}p_i=1$, we have $n=1$ and $p+q=1$. Let $p=\frac{m}{d}$, then $q=\frac{d-m}{d}$. So, $\AtP{p.\ve r+q.\ve r}=\pi_A(dm.\AtP{\ve r}+(d(d-m)).\AtP{\ve r})\re\pi_A(d^2.\AtP{\ve r})$, which reduces with probability $1$ to ${\ve s}$, where $\AtP{\ve r}\to^*{\ve s}$.
    \item Contextual rules are straightfoward.\qedhere
  \end{itemize}
\end{proof}

\subsection{Back from \texorpdfstring{\Lpplus}{L+p}~to~\texorpdfstring{\Alg}{Alg}}
The inverse translation is given by
\[
  \begin{array}{r@{~=~}l@{\qquad}r@{~=~}l@{\qquad}r@{~=~}l}
    \PtA{x^A} & x^A
    &
    \PtA{\ve r\ve s} & \PtA{\ve r}\PtA{\ve s}
    &
    \PtA{\ve r\{A\}} & \PtA{\ve r}\{A\}
    \\
    \PtA{\lambda x^A.\ve r} & \lambda x^A.\PtA{\ve r}
    &
    \PtA{\Lambda X.\ve r} & \Lambda X.\PtA{\ve r}
    &
    \PtA{\ve r+\ve s} & \PtA{\ve r}+\PtA{\ve s}
    \\
    \multicolumn{4}{r}{\mbox{If }\pi_A(\ve t)\re\ve s_i\mbox{ with probability }p_i,\mbox{ for }i=1,\dots,n,}
    &
    \PtA{\pi_A(\ve t)} & \sui{n}p_i.\PtA{\ve s_i}
  \end{array}
\]

\begin{remark} This translation does not admit translating a term of the form $\pi_A(\ve t)$ in normal form. Moreover, let $\Pi$ be the rule
  ``$\pi_{A\Rightarrow B}(\ve r)\ve s\eq\pi_B(\ve r\ve s)\mbox{ with }\ve r:A\Rightarrow(B\wedge C)$'',
  then the translation keep reductions, except for the one using rule $\Pi$, as expressed in Theorem \ref{thm:PtA}.
\end{remark}

\begin{lemma}\label{lem:PtASubstitution}~
  \begin{multicols}{2}
  \begin{enumerate}
    \item\label{it:PtASubsType} $\PtA{\ve r}[A/X]=\PtA{\ve r[A/X]}$
    \item\label{it:PtASubsTerm} $\PtA{\ve r}[\PtA{\ve s}/x]=\PtA{\ve r[\ve s/x]}$
  \end{enumerate}
  \end{multicols}
\end{lemma}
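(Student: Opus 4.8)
The plan is to prove both items by structural induction on $\ve r$, mirroring the proof of Lemma~\ref{lem:AtPSubstitution} step for step. On every constructor on which $\PtA{\cdot}$ acts homomorphically --- variables $x^A$ (and, in item~\ref{it:PtASubsTerm}, a variable $y^A\neq x$), $\lambda y^B.\ve t$, $\ve t_1\ve t_2$, $\Lambda Y.\ve t$, $\ve t\{B\}$ and $\ve r_1+\ve r_2$ --- the translation behaves exactly as $\AtP{\cdot}$ does on the corresponding constructors of \Lplus, so the computations of Lemma~\ref{lem:AtPSubstitution} carry over verbatim: push the substitution inside the constructor, apply the induction hypothesis, and fold the constructor back. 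In particular the case $\ve r=\ve t\{B\}$ of item~\ref{it:PtASubsTerm} again requires first renaming the free type variables $\vec X$ of $\ve s$ to fresh variables $\vec Y$ before commuting the term substitution past $\{B\}$, and then appealing to item~\ref{it:PtASubsType}, precisely as before; since $\PtA{\cdot}$ preserves type annotations and satisfies $FV(\PtA{\ve r})\subseteq FV(\ve r)$, nothing new has to be checked there.

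The one genuinely new case is $\ve r=\pi_A(\ve t)$, where $\PtA{\pi_A(\ve t)}=\sui{n}p_i.\PtA{\ve s_i}$ for the probabilistic reducts $\pi_A(\ve t)\re\ve s_i$ with probability $p_i$; concretely $\ve t=\suj{n}m_j.\ve s_j+\ve u$ with $\ve s_j:A$, $\ve u\notder A$ all closed normal terms, and $p_j=m_j/\suk{n}m_k$ by Theorem~\ref{thm:toprob}. For item~\ref{it:PtASubsType}, the key point is that the probabilistic $\pi$-rule of \Lpplus\ commutes with the type substitution $[C/X]$: one has $\ve t[C/X]=\suj{n}m_j.\ve s_j[C/X]+\ve u[C/X]$ with the multiplicities $m_j$ untouched, $\ve s_j[C/X]:A[C/X]$ and $\ve u[C/X]\notder A[C/X]$, so $\pi_{A[C/X]}(\ve t[C/X])\re\ve s_j[C/X]$ with the \emph{same} probability $p_j$ by Theorem~\ref{thm:toprob}. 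Since each $\ve s_i$ is a proper subterm of $\pi_A(\ve t)$, the induction hypothesis applies, and
\[
  \PtA{\pi_A(\ve t)}[C/X]=\Bigl(\sui{n}p_i.\PtA{\ve s_i}\Bigr)[C/X]=\sui{n}p_i.\bigl(\PtA{\ve s_i}[C/X]\bigr)\overset{IH}{=}\sui{n}p_i.\PtA{\ve s_i[C/X]}=\PtA{(\pi_A(\ve t))[C/X]}\,.
\]
For item~\ref{it:PtASubsTerm}, this case is in fact immediate: the $\pi$-rule of \Lpplus\ fires only when $\ve s_1,\dots,\ve s_n$ and $\ve u$ are closed, so $\pi_A(\ve t)$ is closed, hence by $FV(\PtA{\cdot})\subseteq FV(\cdot)$ so is $\PtA{\pi_A(\ve t)}$; thus both $\PtA{\pi_A(\ve t)}[\PtA{\ve s}/x]$ and $\PtA{(\pi_A(\ve t))[\ve s/x]}$ reduce to $\PtA{\pi_A(\ve t)}$.

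I expect the main obstacle to be the bookkeeping in the $\pi$-case of item~\ref{it:PtASubsType}: one must verify that both side conditions governing the applicability of the probabilistic $\pi$-rule --- the positive part $\ve s_j:A$ and the negative part $\ve u\notder A$ --- are stable under $[C/X]$ (the former is the substitution lemma for the type system, the latter its contrapositive), and that closedness and normality of the $\ve s_j,\ve u$ survive a type substitution; once this is established, equality of the assigned probabilities is exactly Theorem~\ref{thm:toprob} and the rest is the finite-sum manipulation plus induction hypothesis used throughout. The only other subtlety, shared verbatim with Lemma~\ref{lem:AtPSubstitution}, is the capture-avoiding renaming in the $\ve t\{B\}$ case of item~\ref{it:PtASubsTerm}.
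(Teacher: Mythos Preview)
Your proposal is correct and follows essentially the same route as the paper: structural induction on $\ve r$, deferring the homomorphic constructors to Lemma~\ref{lem:AtPSubstitution}, and a direct computation for the new $\pi$-case. Two small remarks. First, in the $\pi$-case of item~\ref{it:PtASubsTerm} the paper simply pushes the term substitution through the sum and invokes the induction hypothesis, rather than your closedness shortcut; both are fine. Second, your justification that stability of the side condition $\ve u\notder A$ under $[C/X]$ is ``the contrapositive'' of the type-substitution lemma is not correct --- the contrapositive of ``$\ve u:A\Rightarrow\ve u[C/X]:A[C/X]$'' says that $\ve u[C/X]\notder A[C/X]$ implies $\ve u\notder A$, which is the wrong direction --- so that step would need an independent argument; the paper, for its part, does not address this point at all.
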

\begin{proof}
  Both items follow by induction on $\ve r$. Cases $x^B$, $\lambda x^B.\ve t$, $\ve t_1\ve t_2$, $\Lambda Y.\ve t$ and $\ve t\{B\}$ are analogous to those in proof of Lemma~\ref{lem:AtPSubstitution}. Hence we only need to verify the case $\pi_B(\ve t)$, when $\ve r\re\ve r_i$ with probability $p_i$, for $i=1,\dots,n$.
  \begin{enumerate}
    \item $\PtA{\pi_B(\ve t)}[A/X]=
      (\sui{n}p_i.\PtA{\ve r_i})[A/X]=
      \sui{n}p_i.\PtA{\ve r_i}[A/X]$,
      which by the induction hypothesis, is equal to
      $\sui{n}p_i.\PtA{\ve r_i[A/X]}=
      \PtA{\pi_{B[A/X]}(\ve t[A/X])}=
      \PtA{(\pi_B(\ve t))[A/X]}$.
    \item $\PtA{\pi_B(\ve t)}[\ve s/x]=
      (\sui{n}p_i.\PtA{\ve r_i})[\ve s/x]=
      \sui{n}p_i.\PtA{\ve r_i}[\ve s/x]$,
      which by the induction hypothesis, is equal to
      $\sui{n}p_i.\PtA{\ve r_i[\ve s/x]}=
      \PtA{\pi_B(\ve t[\ve s/x])}=
      \PtA{(\pi_B(\ve t))[\ve s/x]}$.\qedhere
  \end{enumerate}
\end{proof}

\begin{theorem}
  \label{thm:PtA} Let $\ve r,\ve s,\ve s_i$ in \Lpplus. 
  \begin{itemize}
    \item If $\ve r\eq\ve s$, then $\PtA{\ve r}\eq\PtA{\ve s}$.
    \item If $\ve r\re\ve s$, with probability $1$, then $\PtA{\ve r}\re\PtA{\ve s}$, except if the reduction is done by rule $\Pi$.
    \item If $\ve r\re\ve s_i$ with probability $p_i$, for $i=1,\dots,n$, then 
      $\PtA{\ve r}=\sui{n}p_i.\PtA{\ve s_i}$.
  \end{itemize}
\end{theorem}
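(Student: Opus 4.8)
The plan is to prove each of the three clauses by structural induction on $\ve r$, following the same pattern already used in Lemma~\ref{lem:AtPSubstitution} and in Theorem~\ref{thm:AlgtoLp}, but now tracking how $\PtA{\cdot}$ interacts with the rewrite rules of \Lpplus. The backbone of the argument is a case analysis on the last rule applied (together with the contextual closure), since $\PtA{\cdot}$ is defined compositionally on all term constructors except $\pi_A$, whose image is exactly the probabilistic superposition $\sui{n}p_i.\PtA{\ve s_i}$ produced by the new projection rule.

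First I would dispatch the symmetric relation $\eq$. Each equation of the symmetric fragment of \Lpplus\ (commutativity and associativity of $+$, distribution of $+$ over application and over $\lambda$, and the rule $\Pi$: $\pi_{A\Rightarrow B}(\ve r)\ve s\eq\pi_B(\ve r\ve s)$) has the property that $\PtA{\cdot}$ maps both sides to terms related by the \emph{corresponding} symmetric rule of \Alg\ — for $+$-rules this is immediate since $\PtA{\ve r+\ve s}=\PtA{\ve r}+\PtA{\ve s}$, and for $\Pi$ one has to check that both $\PtA{\pi_{A\Rightarrow B}(\ve r)\ve s}$ and $\PtA{\pi_B(\ve r\ve s)}$ unfold to the same (or $\eq$-related) $\Alg$-term once the probabilistic projection under the binder is computed; this is the delicate one, which is also why the remark flags $\Pi$ as the exception in the reduction clause. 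Contextual closure of $\eq$ follows by the induction hypothesis together with compositionality of $\PtA{\cdot}$.

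Next I would handle the deterministic reductions (beta, type application, $p.q.\ve r\re pq.\ve r$, $p.(\ve r+\ve s)\re p.\ve r+p.\ve s$, $p.\ve r+q.\ve r\re(p+q).\ve r$ — though only those of \Lpplus\ that fire with probability $1$, i.e.\ the non-projection rules). For beta and type instantiation we invoke Lemma~\ref{lem:PtASubstitution}(\ref{it:PtASubsTerm}) and (\ref{it:PtASubsType}) respectively, exactly as in Theorem~\ref{thm:AlgtoLp}; for the scalar/factorisation rules we use that $\PtA{\cdot}$ is defined on all these constructors so that each \Lpplus-step is sent to the matching \Alg-step. The rule $\Pi$ is explicitly excluded here, for the reason above. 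For the third clause, the only rule of \Lpplus\ that produces a genuinely probabilistic reduction step is the new projection rule $\pi_A(\sui{n}m_i.\ve r_i+\ve s)\re\ve r_i$ with probability $m_i/\suj{n}m_j$; but $\PtA{\pi_A(\ve t)}$ is \emph{defined} to be $\sui{n}p_i.\PtA{\ve s_i}$ precisely by collecting the probabilistic reducts $\ve s_i$ of $\pi_A(\ve t)$ with their probabilities $p_i$, so the identity $\PtA{\ve r}=\sui{n}p_i.\PtA{\ve s_i}$ holds by definition, and under a context it holds by the induction hypothesis.

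The main obstacle, as the remark anticipates, is the rule $\Pi$. One must verify that for \emph{every} other rule the $\PtA{\cdot}$ image of a step is again a legal step (or equality) of \Alg, and in particular that the compositional clauses of $\PtA{\cdot}$ commute with forming $\eq$- and $\re$-redexes inside contexts — this requires the substitution lemma and a careful check that the side conditions of the \Lpplus\ rules (typing constraints, $\ve s\notder A$, closed normal reducts of the projection) translate to admissible \Alg-configurations. I expect the bookkeeping for $\Pi$ to be where the argument genuinely needs care: because $\PtA{\cdot}$ pushes $\pi$ to a superposition, the two sides of $\Pi$ may not be \emph{syntactically} equal after translation, and one has to argue they coincide up to the equational theory of \Alg\ — which is why the theorem only claims preservation of reductions \emph{except} for $\Pi$, and treats it separately (implicitly, via the $\eq$ clause and the remark).
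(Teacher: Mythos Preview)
Your overall approach --- case analysis on the rewrite rules of $\Lpplus$, with Lemma~\ref{lem:PtASubstitution} for the two $\beta$-rules and the definitional clause of $\PtA{\pi_A(\cdot)}$ for the probabilistic projection --- is exactly what the paper does. Two points, however, reveal a confusion between the source and target calculi.

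First, you list the scalar and factorisation rules $p.q.\ve r\re pq.\ve r$, $p.(\ve r+\ve s)\re p.\ve r+p.\ve s$, $p.\ve r+q.\ve r\re(p+q).\ve r$ among the deterministic reductions to be checked. These are rules of $\Alg$, not of $\Lpplus$: the grammar of $\Lpplus$ has no scalars at all, so these cases simply do not arise. The only probability-$1$ reductions in $\Lpplus$ are the two $\beta$-rules. Second, and more substantively, you propose to handle $\Pi$ within the $\eq$ clause by arguing that $\PtA{\pi_{A\Rightarrow B}(\ve r)\ve s}$ and $\PtA{\pi_B(\ve r\ve s)}$ unfold to $\eq$-related $\Alg$-terms. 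The paper does \emph{not} do this: its proof simply omits $\Pi$ from the case analysis, consistently with the Remark preceding the theorem. Your sketch cannot work in general, because $\PtA{\pi_A(\ve t)}$ is defined only when $\pi_A(\ve t)$ actually fires the probabilistic projection, which requires the body to be a sum of closed normal terms of type $A$; for the left-hand side $\pi_{A\Rightarrow B}(\ve r)$ of $\Pi$ this need not hold (indeed the Remark notes that $\PtA{\cdot}$ is undefined on a normal $\pi_A(\ve t)$), so one side of $\Pi$ may be untranslatable. The theorem therefore just excludes $\Pi$ outright; it is not handled ``implicitly via the $\eq$ clause'' as you suggest.
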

\begin{proof}
  Case by case analysis. 
  \begin{itemize}
    \item Consider $\ve r+\ve s\eq\ve s+\ve r$. Notice that 
      $\PtA{\ve r+\ve s}=
      \PtA{\ve r}+\PtA{\ve s}\eq\PtA{\ve s}+\PtA{\ve r}=
      \PtA{\ve s+\ve r}$.
    \item Consider $(\ve r+\ve s)+\ve t\eq\ve r+(\ve s+\ve t)$. Notice that
      $\PtA{(\ve r+\ve s)+\ve t}=
      (\PtA{\ve r}+\PtA{\ve s})+\PtA{\ve t}\eq
      \PtA{\ve r}+(\PtA{\ve s}+\PtA{\ve t})=
      \PtA{\ve r+(\ve s+\ve t)}$.
    \item Consider $(\ve r+\ve s)\ve t\eq\ve r\ve t+\ve s\ve t$. Notice that
      $\PtA{(\ve r+\ve s)\ve t}=
      (\PtA{\ve r}+\PtA{\ve s})\PtA{\ve t}\eq
      \PtA{\ve r}\PtA{\ve t}+\PtA{\ve s}\PtA{\ve t}=
      \PtA{\ve r\ve t+\ve s\ve t}$.
    \item Consider $\lambda x^A.(\ve r+\ve s)\eq\lambda x^A.\ve r+\lambda x^A.\ve s$. Notice that
      $\PtA{\lambda x^A.(\ve r+\ve s)}=
      \lambda x^A.(\PtA{\ve r}+\PtA{\ve s})\eq
      \lambda x^A.\PtA{\ve r}+\lambda x^A.\PtA{\ve s}=
      \PtA{\lambda x^A.\ve r+\lambda x^A.\ve s}$.
    \item Consider $(\lambda x^A.\ve r)\ve s\re\ve r[\ve s/x]$. Notice that
      $\PtA{(\lambda x^A.\ve r)\ve s}=
      (\lambda x^A.\PtA{\ve r})\PtA{\ve s}\re
      \PtA{\ve r}[\PtA{\ve s}/x]$, and this, by Lemma~\ref{lem:PtASubstitution}(\ref{it:PtASubsTerm}), is equal to
      $\PtA{\ve r[\ve s/x]}$.
    \item Consider $(\Lambda X.\ve r)\{A\}\re\ve r[A/X]$. Notice that
      $\PtA{(\Lambda X.\ve r)\{A\}}=
      \Lambda X.\PtA{\ve r}[A/X]\re
      \PtA{\ve r}[A/X]$, and this, by Lemma~\ref{lem:PtASubstitution}(\ref{it:PtASubsType}), is equal to
      $\PtA{\ve r[A/X]}$.
    \item Consider $\pi_A(\sui{n}m_i.\ve r_i+\ve s)\re\ve r_i$ with probability $\frac{m_i}{\suj{n}m_j}$, where $\ve r_i:A$ and $\ve s\notder A$ are closed normal terms. Notice that, by definition,
      $\PtA{\pi_A(\sui{n}m_i.\ve r_i+\ve s)}=
      \sui{n}\frac{m_i}{\suj{n}m_j}.\PtA{\ve r_i}$.\qedhere
  \end{itemize}
\end{proof}

\section{Conclusion}
In this paper we have defined a probability space on the execution traces of non-confluent abstract rewrite systems. We define a sample space on strategies deciding the rewrite to apply at each state (cf.~Definition~\ref{def:strategy}). 

Our main motivation has been to be able to use this probability space in non-deterministic calculi, hence being able to encode a probability superposition of the kind $\alpha.\ve t+\beta.\ve r$, with $\alpha+\beta=1$, as a term having probability $\alpha$ of rewriting to $\ve t$ and probability $\beta$ of rewriting to $\ve r$. 
As an example, we provided such an encoding from an algebraic calculus into a non-deterministic calculus.

\paragraph{Acknowledgements}
We would like to thank Laurent Duvernet for enlightening discussions.
This work has been funded by the ANR-10-JCJC-0208 CausaQ grant.

\bibliography{biblio}

\end{document}